
\documentclass[11pt,reqno,a4paper]{amsart}
\usepackage[T1]{fontenc}
\usepackage[utf8]{inputenc}
\usepackage{amsmath}
\usepackage{algorithm}
\usepackage{amssymb}
\usepackage{graphicx}

\RequirePackage{enumitem}
\setenumerate{itemsep=0pt}
\setitemize{itemsep=0pt}

\usepackage{cases}
\usepackage{subfigure}
\usepackage{verbatim}
\usepackage{comment}
\usepackage{amsthm}
\usepackage{thmtools}
\usepackage{url}

\setcounter{secnumdepth}{2}

\newcommand{\B}{\mathrm{B}}
\newcommand{\D}{\mathrm{D}}
\newcommand{\sca}[2]{\langle #1 | #2\rangle}
\newcommand{\Haus}{\mathrm{vol}}
\newcommand{\Lag}{\mathrm{Lag}}
\newcommand{\Class}{\mathcal{C}}
\newcommand{\Rsp}{\mathbb{R}}
\newcommand{\Nsp}{\mathbb{N}}


\usepackage[noend]{algpseudocode}
\usepackage{algorithmicx}

\usepackage{stmaryrd}

\usepackage{color}
\definecolor{red}{rgb}{1,0,0}

\DeclareMathOperator*{\argmax}{arg\,max}
\DeclareMathOperator{\Span}{span}
\DeclareMathOperator{\rk}{rk}
\DeclareMathOperator{\im}{im}
\DeclareMathOperator{\inter}{int}
\DeclareMathOperator\supp{spt}
\DeclareMathOperator\Pow{Pow}

\newcommand{\Tphi}{T_\phi}
\renewcommand{\Tphi}{S_\phi}

\newcommand{\R}{\mathbb{R}}
\newcommand{\N}{\mathbb{N}}
\newcommand{\1}{\mathrm{\textbf{1}}}
\newcommand{\OneN}{\llbracket 1,N \rrbracket}
\renewcommand{\D}{\mathrm{D}}
\renewcommand{\phi}{\varphi}
\renewcommand{\Haus}{\mathcal{H}}
\newcommand{\Sad}{\mathcal{S}^{\alpha,\delta}} 

\newtheorem{theorem}{Theorem}
\newtheorem{lemma}[theorem]{Lemma}
\newtheorem{proposition}[theorem]{Proposition}
\newtheorem{corollary}[theorem]{Corollary}

\theoremstyle{definition}
\newtheorem{definition}[theorem]{Definition}

\theoremstyle{remark}
\newtheorem{remark}[theorem]{Remark}

\title{A Damped Newton algorithm for Generated Jacobian Equations}
\author{Anatole Gallou\"{e}t}
\address{Univ. Grenoble Alpes, CNRS, Grenoble INP, LJK, 38000 Grenoble, France}
\email{anatole.gallouet@univ-grenoble-alpes.fr}
\author{Quentin M\'erigot}
\address{Université Paris-Saclay, CNRS, Laboratoire de mathématiques d’Orsay, 91405, Orsay, France \and
  Institut universitaire de France (IUF)}
\email{quentin.merigot@universite-paris-saclay.fr}
\author{Boris Thibert}
\address{Univ. Grenoble Alpes, CNRS, Grenoble INP, LJK, 38000 Grenoble, France}
\email{Boris.Thibert@univ-grenoble-alpes.fr}

\date{\today}
\begin{document}
\maketitle

\begin{abstract}
 Generated Jacobian Equations have been introduced by Trudinger
 [Disc. cont. dyn. sys (2014), pp. 1663--1681] as a generalization of
 Monge-Ampère equations arising in optimal transport. In this paper, we
 introduce and study a damped Newton algorithm for solving these equations in
 the \emph{semi-discrete} setting, meaning that one of the two
 measures involved in the problem is finitely supported and the other one is absolutely continuous. We also
 present a numerical application of this algorithm to the near-field
 parallel refractor problem arising in non-imaging problems.
\end{abstract}

\section{Introduction}
This paper is concerned with the numerical resolution of \emph{Generated Jacobian equations}, introduced by N. Trudinger~\cite{trudinger2012local} as a generalization of Monge-Ampère equations arising in optimal transport. Generated Jacobian equations  were originally motivated by inverse problems arising in non-imaging optics in the near-field case~\cite{kochengin1997determination,guillen2017pointwise,gutierrez2015regularity} but they also apply to problems arising in economy~\cite{noldeke2018implementation,galichon2019costly}. A survey on these equations and their applications was recently written by N. Guillen~\cite{guillen2019primer}.  The input for a generated Jacobian equations are two probability measures $\mu$ and $\nu$ over two spaces $X$ and $Y$, and a \emph{generating function} $G:X\times Y\times \Rsp\to\Rsp$. Loosely speaking, a scalar function $\psi$ on $Y$ is an Alexandrov solution to the generated jacobian equation if the map
$T_\psi$ defined by
$$T_\psi(x)\in \argmax_{y\in Y} G(x,y,\psi(y))$$
transports $\mu$ onto $\nu$, i.e. $\nu$ is the image of the measure $\mu$ under $T_\psi$, denoted
$$T_{\psi\#}\mu = \nu.$$ Note that one needs to impose some conditions
on $\mu$ and $G$ ensuring that the map $T_\psi$ is well-defined
$\mu$-almost everywhere. One can describe the meaning of this equation
using an economic metaphor. We consider $X$ as a set of customers, $Y$
as a set of products and $G(x,y,\psi(y))$ corresponds to the utility
of the product $y$ for the customer $x$ given a price $\psi(y)$. The
probability measure $\mu$ and $\nu$ describe the distribution of
customers and products. The map $T_\psi$ can be described as the
``best response'' of customers given a price menu $\psi: Y\to\Rsp$:
each customer $x\in X$ tries to maximize its own utility $G(x,y,\psi(y))$ over
all products $y\in Y$: the maximizer, if it exists and is unique, is denoted $T_\psi(x)$. Then, $\psi$ is a solution to the generated jacobian
equation if the best response map $T_\psi$ pushes the distribution of
customers to the distribution of available products $\nu$.

In this article, we are interested in algorithms for solving the
semi-discrete case, where the source measure $\mu$ is absolutely
continuous with respect to the Lebesgue measure on $X\subseteq\Rsp^d$
and the target measure $\nu$ is finitely supported. Such
discretization can be traced back to Minkowski, but have been used
more recently to solve Monge-Ampère equations
\cite{oliker1989numerical}, problems from non-imaging optics
\cite{caffarelli1999problem}, more general optimal transport problems
\cite{kitagawa2014iterative}, but also generated Jacobian
equations~\cite{abedin2017iterative}. In all the cited papers, the
methods are coordinate-wise algorithms with minimal increment and are
similar to the algorithm introduced by
Oliker-Prussner~\cite{oliker1989numerical}. The number of iterations
of these algorithms scales more than cubicly ($N^3$, where $N$ is the
size of the support of $\nu$), making them limited to fairly small
discretizations.  More recently, Newton methods have been introduced
to solve semi-discrete optimal transport
problems~\cite{kitagawa2016newton,merigot2018algorithm}. In this
paper, we show that newtonian techniques can also be applied to
Generated Jacobian equations under mild conditions on the generating
function $G$.

\subsection*{Semi-discrete optimal transport.} 
The semi-discrete setting refers to the case where one is given an absolutely continuous probability measure $\mu$ (with respect to the Lebesgues measure) supported on a domain $X$ of $\R^d$ and a discrete probability measure $\nu=\sum_y \nu_y \delta_y$ supported on a finite set $Y$. Given a cost function $c : X \times Y \to \R$, the optimal transport problem amounts to finding a function $T : X \to Y$ that minimizes the total cost $\int_X c(x, T(x)) d\mu(x)$ under the condition $\mu(T^{-1}(y)) = \nu_y$ for any $y \in Y$. This problem can be recast,  using Kantorovitch duality under some mild conditions on the cost $c$, into finding a dual potential $\psi : Y \to \R$ that satisfies
\begin{equation}\label{eq:otdual}
\forall y \in Y\quad \mu(\Lag_y(\psi)) = \nu_y\ \tag{$\mathrm{MA}$}
\end{equation}
where $\Lag_y(\psi)$ are the Laguerre cells defined by
\[
\Lag_y(\psi) = \left\{ x \in X \mid \forall z \in Y, c(x,y) + \psi(y) \leq c(x,z) + \psi(z) \right\}.
\]
The application $T_\psi$ defined for $x \in X$ by $T_\psi(x) = y$ if $
x \in \Lag_y(\psi)$ is then an optimal transport map between $\mu$ and $\nu$ for the cost $c$, and satisfies in particular $T_{\psi\#} \mu = \nu$.  Equation~\eqref{eq:otdual} can
be regarded as a discrete version of the Monge-Amp\`ere type equation
arising in optimal transport. We refer for instance to
\cite[\S2]{berman2018convergence} for more details in the case $c(x,y) = -\sca{x}{y}$.

\subsection*{Generated Jacobian equation.}
The Generated Jacobian equation in the semi-discrete setting has a very similar form. The problem also amounts to finding a function $\psi:Y\to \R$ that satisfies Equation~\eqref{eq:otdual}, but the Laguerre cells have a more general form and read
\[ \Lag_y(\psi) = \left\{ x \in X \mid \forall z \in Y, G(x, y, \psi(y)) \geq G(x,z, \psi(z)) \right\} \]
where $G$ is called a \emph{generating function}. When $G$ is linear in the last variable, i.e. when $G(x,y,v) = -c(x,y) - v$, one obviously recovers the Laguerre cells from optimal transport. 

Note that the lack of linearity in the generating function $G$ adds
several theoretical and practical difficulties. To see this, consider  the mass function
$$\begin{aligned}
  H : \R^Y \to \R^Y, ~~
  \psi \mapsto (\mu(\Lag_y(\psi)))_{y \in Y}.
  \end{aligned}$$
In the optimal transport case, the function $H$ is invariant under the
addition of a constant (i.e. $H(\psi+c) = H(\psi)$ for any $c\in\Rsp$), which entails
under mild assumptions that the kernel of $\D H(\psi)$ has rank one
and coincides with the vector space of  constant
functions on $Y$~\cite{kitagawa2016newton}. Furthermore, as a consequence of
Kantorovitch duality, the function $H$ is the gradient of a
functional, called \emph{Kantorovitch functional} in
\cite{kitagawa2016newton}. This implies that the differential $\D
H(\psi)$ is symmetric. In the case of generated Jacobian equations,
these two properties do not hold anymore: the differential $\D
H(\psi)$ is not necessarily symmetric and its kernel is not reduced to
the set of constant functions in general. \\

In this article, we generalize the damped Newton algorithm proposed in~\cite{kitagawa2016newton} to solve generated Jacobian equations. Note that unlike~\cite{kitagawa2016newton} we do not require any Ma-Trudinger-Wang type condition to prove the convergence of our algorithm. In Section \ref{sec:GJE} we recall the notion of generating function and its properties, and introduce the generated Jacobian equation in the semi-discrete setting.  
Section \ref{sec:num} is entirely dedicated to the numerical resolution of the generated Jacobian equation. 
In Section~\ref{sec:application}, we apply our algorithm to numerically solve the Near Field Parallel Reflector problem. Note that F. Abedin and C. Gutierrez also consider this problem~\cite{abedin2017iterative}, but their algorithm requires a strong condition, called \emph{Visibility Condition}, that implies the \emph{Twist condition} (defined hereafter) of the generating function $G$. 
We show that under a much weaker assumption, this twist condition holds for a subset of dual potential $\psi:Y\to \R$ on which we can apply our algorithm. It is very likely that our assumption could also be adapted to~\cite{abedin2017iterative}.

\section{Semi-discrete generated Jacobian equation}\label{sec:GJE}
In this section, we recall the notions introduced by N. Trudinger in order to define the generated Jacobian equation~\cite{trudinger2012local} in the semi-discrete setting. 
Let $\Omega$ be an open bounded domain of $\R^d$, let $X$ be a compact subset of $\Omega$ and let $Y$ be a finite set of $\R^d$. 
Let $\mu$ be a measure on $\Omega$, which is absolutely continuous with respect to the Lebesgue measure, with non-negative density $\rho$ supported on $X$ (i.e. $\supp(\rho) \subset X$), and let $ \nu = \sum_{y\in Y} \nu_y \delta_{y}$ be a measure on the finite set $Y$ such that all $\nu_y$ are positive ($\nu_y > 0$).  These two measures must satisfy the mass balance condition $\mu(X) = \nu(Y)$ and it is not restrictive to view them as probability measures:
\begin{equation*}
	\int_{X} \rho (x) dx = \sum_{y\in Y} \nu_y= 1
\end{equation*}

\noindent \textbf{Notations.} We denote by $\mathcal{H}^k$ the $k$-dimensional Hausdorff measure in $\R^d$. In particular $\mathcal{H}^d$ is the Lebesgue measure in $\R^d$. The set of functions from $Y$ to $\R$ is denoted by $\R^Y$.  We denote by $\sca{\cdot}{\cdot}$  the Euclidean scalar product, by $\| \cdot \|$ the Euclidean norm, by $\B(x,r)$ the Euclidean  ball of center $x$ and radius $r$, by $\chi_A : \R^d \to \{0,1\}$ the indicator function of a set $A$. The image and kernel of a matrix $M$ are respectively denoted by  $\im(M)$ and $\ker(M)$. We denote by $\Span(u)$ the linear space spanned by a vector $u$, by $\nabla_xG$ the gradient of a function $G$ with respect to $x$ and by $\partial_vG$ its scalar derivative with respect to $v$. Finally, for $N\in\Nsp$, we denote $\OneN = \{1,\hdots,N\}$. 

\subsection{Generating function}\label{sec:genfunc}
We recall below the notion of generating function and $G$-convexity in the semi-discrete setting~\cite{trudinger2012local,abedin2017iterative}. 
\begin{definition}[Generating function]
	\label{defG}
	Let $a,b \in \R \cup \{ -\infty, + \infty\}$ with $a < b$ and
        $I = ]a,b[$. A function $G : \Omega \times Y \times I \to \R$
            is called a generating function. We assume that it
            satisfies the following properties:
	\begin{itemize}
		\item \emph{Regularity condition:} 
		$(x,y,v) \mapsto G(x,y,v)$ is continuously differentiable in $x$ and $v$, and
		\[  \forall \alpha < \beta \in I, \sup_{(x,y,v) \in \Omega \times Y \times [\alpha,\beta]} |\nabla_xG(x,y,v)| < + \infty \label{Regularity} \tag{Reg} \]
		\item \emph{Monotonicity condition:}  \[ \forall (x,y,v) \in \Omega \times Y \times I : \partial_vG(x,y,v) < 0 \label{Monotonicity} \tag{Mono} \]
		\item \emph{Twist condition:} \[ \forall x \in \Omega, (y,v) \mapsto (G(x,y,v), \nabla_xG(x,y,v)) \text{ is injective on } Y \times I \label{Twist} \tag{Twist}\]
		\item \emph{Uniform Convergence condition:} \[\forall y \in Y, \lim_{v \to a} \inf_{x \in \Omega} G(x,y,v) = + \infty \label{Uniform Convergence} \tag{UC} \]
	\end{itemize}
\end{definition}
\begin{remark}[Range of $G$]
	\label{interval}
	Through the whole paper we can and will consider that $I = \R$.
	Indeed suppose that  $G : \Omega \times Y \times I \to \R$ satisfies the assumptions of the above definition. Considering  a strictly increasing $\Class^1$ diffeomorphism $\zeta : \R \to I$ and setting  $\widetilde{G}(x,y,v) = G(x,y, \zeta(v))$, we get a generating function $\tilde{G}: \Omega\times Y \times \Rsp \to \Rsp$, which also satisfies the conditions above. Moreover, up to reparametrization, the generated Jacobian equations associated to $G$ and $\tilde{G}$ are equivalent.
\end{remark}
\begin{remark} Note that F. Abedin and C. Gutierrez \cite{abedin2017iterative} impose a slightly more restrictive inequality in condition~\eqref{Regularity}: their supremum is taken over $\Omega \times Y \times ]- \infty, \alpha]$ for any $\alpha$ instead of $\Omega \times Y \times [\alpha,\beta]$. We changed here this condition in order to handle the Near Field point reflector problem in the last section.
\end{remark}


\begin{definition}[G-convexity]
  \label{G-convexity}
  Let $\phi: \Omega\to\R$ be a function. If $\phi \geq G(\cdot,y_0, \lambda_0)$ for all $x\in\Omega$ with equality at $x=x_0$, we say that the function $G(\cdot,y_0,\lambda_0)$ \emph{supports} $\phi$ at $x_0$. 
	A function $\phi : \Omega \to \R$ is said to be G-convex if it is supported at every point, i.e. for all $x_0 \in \Omega$,
	\begin{equation}
		\label{G-conv}
		\exists (y_0, \lambda_0) \in Y \times \R \text{ s.t. } 
		\begin{cases}
			\forall x \in \Omega, \phi(x) \geq G(x, y_0, \lambda_0) \\
			\phi(x_0) = G(x_0, y_0,\lambda_0)
		\end{cases}
	\end{equation}
\end{definition}

\begin{remark}[Relation with convexity]
	The notion of G-convexity generalizes in a certain sense the
        notion of convexity. Intuitively, it amounts to replacing the
        supporting hyperplanes by functions of the form $G(\cdot, y,
        \lambda)$. If $G(\cdot, y, \lambda)$ is convex for any $y\in
        Y$ and any $\lambda\in \R$, then a $G$-convex function is
        always convex. Moreover, if the generating function $G$ is
        affine (i.e $G(x,y,\lambda) = \langle x,y \rangle + \lambda$)
        and if $Y=\R^d$, then the notions of G-convexity and convexity
        are equivalent.
\end{remark}

\begin{definition}[G-subdifferential]
	Let $\phi$ be a G-convex function and let $x_0 \in \Omega$. The G-subdifferential $\partial_G \phi$ of $\phi$ at $x_0$ is defined by
	\begin{equation}
		\label{Gsub}
		\partial_G \phi (x_0) = \left\{ y \in Y \mid \exists \lambda_0 \in \R \text{s.t. $G(\cdot , y , \lambda_0)$ supports $\phi$ at $x_0$}  \right\}
	\end{equation}	
\end{definition}

The following lemma (Lemma~2.1 in \cite{abedin2017iterative}) shows
that the $\partial_G\phi$ is single-valued almost everywhere, and
induces a measurable 

\begin{lemma} \cite[Lemma 2.1]{abedin2017iterative}
	\label{uniqueness-a-e}
	Let $\phi$ be G-convex with $G$ satisfying \eqref{Regularity}, \eqref{Monotonicity} and \eqref{Twist}. Then, there exists a measurable map $\Tphi:\Omega\to Y$ s.t.
        $$ \hbox{ for a.e. } x \in \Omega,\quad \partial_G\phi(x) = \{ \Tphi(x) \}.$$
	\end{lemma}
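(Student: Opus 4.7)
The plan is in two parts: first establish that $\partial_G\phi(x)$ is a singleton at every point where $\phi$ is Fr\'echet differentiable, and then show that $\phi$ is Lebesgue-almost everywhere differentiable. Once both hold, defining $\Tphi(x)$ as the unique element of $\partial_G\phi(x)$ on the differentiability set and as any fixed $y_0\in Y$ on the complementary null set will yield a Borel-measurable selection. For the first point, suppose $\phi$ is differentiable at $x_0$ and that $(y,\lambda)$ and $(y',\lambda')$ both support $\phi$ at $x_0$. The nonnegative functions $\phi-G(\cdot,y,\lambda)$ and $\phi-G(\cdot,y',\lambda')$ each attain their minimum, equal to zero, at $x_0$; since $G$ is $\mathcal{C}^1$ in $x$ by \eqref{Regularity}, the first-order conditions give
\[
\phi(x_0)=G(x_0,y,\lambda)=G(x_0,y',\lambda'),\qquad \nabla\phi(x_0)=\nabla_xG(x_0,y,\lambda)=\nabla_xG(x_0,y',\lambda'),
\]
and \eqref{Twist} then forces $(y,\lambda)=(y',\lambda')$, in particular $y=y'$.

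The main obstacle is a.e.~differentiability. I would show that $\phi$ is locally Lipschitz on $\Omega$ and then apply Rademacher's theorem. Fix a compact $K\subset\Omega$ and, at each $x\in K$, pick a supporting pair $(y_x,\lambda_x)$ with $\phi(x)=G(x,y_x,\lambda_x)$; \eqref{Monotonicity} makes $\lambda_x$ the unique $v$ with $G(x,y_x,v)=\phi(x)$. Combining the lower bound $\phi\ge G(\cdot,y_0,\lambda_0)$ inherited from any fixed supporting pair, the finiteness of $Y$, and \eqref{Uniform Convergence} (available because $G$ is a generating function in the sense of Definition~\ref{defG}, and needed to rule out $\lambda_{x_n}\to a$ as $\phi(x_n)\to+\infty$), one confines the parameters $\lambda_x$ to a compact subinterval $[\alpha,\beta]\subset I$ as $x$ ranges over $K$. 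The uniform gradient bound in \eqref{Regularity} then yields a Lipschitz constant on $K$ via the two-sided estimate
\[
G(x,y_{x'},\lambda_{x'})-G(x',y_{x'},\lambda_{x'})\le \phi(x)-\phi(x')\le G(x,y_x,\lambda_x)-G(x',y_x,\lambda_x),
\]
inherited from the one-sided inequalities $\phi\ge G(\cdot,y_x,\lambda_x)$ and its symmetric version with $(y_{x'},\lambda_{x'})$.

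For measurability, set $L_y=\{x\in\Omega\mid y\in\partial_G\phi(x)\}$. For $x\in L_y$, the value $\lambda_y(x)$ determined by $G(x,y,\lambda_y(x))=\phi(x)$ is unique and continuous in $x$ by \eqref{Monotonicity} and the continuity of $\phi$; passing to the limit in $\phi(z)\ge G(z,y,\lambda_y(x_n))$ along sequences $x_n\to x$ in $L_y$ shows that $L_y$ is closed in $\Omega$. Letting $D$ be the set of differentiability points of $\phi$, the previous step gives $L_y\cap L_{y'}\cap D=\emptyset$ for $y\ne y'$, while $G$-convexity ensures $D=\bigsqcup_{y\in Y}(L_y\cap D)$. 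Setting $\Tphi(x)=y$ on $L_y\cap D$ and $\Tphi(x)=y_0$ on $\Omega\setminus D$ for some arbitrary fixed $y_0\in Y$ then defines a Borel-measurable map satisfying $\partial_G\phi(x)=\{\Tphi(x)\}$ for a.e.~$x\in\Omega$.
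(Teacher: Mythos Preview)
The paper does not supply its own proof of this lemma; it simply quotes \cite[Lemma~2.1]{abedin2017iterative}. Your strategy---show that \eqref{Twist} forces $\partial_G\phi(x_0)$ to be a singleton whenever $\phi$ is differentiable at $x_0$, then prove $\phi$ is locally Lipschitz and invoke Rademacher---is the standard one and is essentially what Abedin and Guti\'errez do.

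One point deserves tightening. Your argument for confining the supporting values $\lambda_x$ to a compact interval $[\alpha,\beta]\subset I$ explicitly handles only the endpoint $a$ (via \eqref{Uniform Convergence}, which incidentally is not among the stated hypotheses of the lemma) and says nothing about $\lambda_{x_n}\to b$. The clean fix uses only \eqref{Monotonicity} and the finiteness of $Y$: if $G(\cdot,y,\lambda_1)$ supports $\phi$ at $x_1$ and $G(\cdot,y,\lambda_2)$ supports $\phi$ at $x_2$ with the \emph{same} $y$, then evaluating each support inequality at the other contact point and using strict monotonicity in $v$ forces $\lambda_1=\lambda_2$. Hence $\lambda_x$ depends only on $y_x\in Y$, so $\{\lambda_x:x\in\Omega\}$ is a finite set and is trivially contained in some $[\alpha,\beta]$. (This is exactly the computation carried out later in the proof of Proposition~\ref{prop-dual}, and it makes the appeal to \eqref{Uniform Convergence} unnecessary.) With this in place, your two-sided estimate combined with \eqref{Regularity} yields the local Lipschitz bound, and the remainder of your argument---Rademacher, closedness of the sets $L_y$, and the Borel selection---goes through as written.
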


We can define the notion of generated Jacobian equation. 
\begin{definition}[Brenier solution to the GJE]\label{def:brenier}
	A function $\phi : X \to \R$ is a Brenier solution to the  generated Jacobian equation between a probability density $\mu$ on $\Omega$ and a probability measure $\nu = \sum_{y\in Y} \nu_y\delta_{y}$ on $Y$ if it satisfies
\begin{equation}
\label{GenJac}
\tag{GenJac}
\begin{cases}
\phi \text{ is G-convex}\\
\forall y \in Y ,  \mu(\Tphi^{-1}(\{y\})) = \nu_y
\end{cases}
\end{equation}
\end{definition}

\subsection{G-transform}\label{sec:genjac}
The goal in this section is to write a dual formulation of the
generated Jacobian equation, using the notion of $G$-transform
introduced by Trudinger \cite{trudinger2012local}.
\begin{definition} 
	 The $G$-transform $\psi^G : \Omega \to \R$ of $\psi: Y\to\Rsp$ is defined by
	\begin{equation}
	\label{psiG}
	\forall x \in \Omega,~~ \psi^G(x) = \max_{y \in Y} G(x, y, \psi(y)).
	\end{equation}
\end{definition}

\begin{proposition}
\label{prop-dual}
	Assume $G$ satisfies \eqref{Regularity}, \eqref{Monotonicity} and \eqref{Twist} and let $\phi : \Omega \to \R$ be a $G$-convex function. Then there exists $\psi : \Tphi(\Omega) \to \R$ s.t. \[ \forall x \in \Omega,~~ \phi(x) = \max_{y \in \Tphi(\Omega)} G(x, y, \psi(y)) \]
\end{proposition}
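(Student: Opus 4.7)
The approach is to define $\psi(y)$ from a supporting function of $\phi$ at some point $x$ with $\Tphi(x)=y$, and then verify the two inequalities making up the max identity. Concretely, for each $y\in\Tphi(\Omega)$, pick $x_y\in\Omega$ with $\Tphi(x_y)=y$ chosen so that $\partial_G\phi(x_y)=\{y\}$ (possible by Lemma~\ref{uniqueness-a-e}); by definition of the $G$-subdifferential, there exists $\lambda_y\in\R$ such that $G(\cdot,y,\lambda_y)$ supports $\phi$ at $x_y$, and I set $\psi(y):=\lambda_y$.

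First I would check that $\psi(y)$ does not depend on the chosen supporting representative. Suppose $G(\cdot,y,\lambda_1)$ supports $\phi$ at $x_1$ and $G(\cdot,y,\lambda_2)$ supports $\phi$ at $x_2$. Then $\phi(x_1)=G(x_1,y,\lambda_1)\geq G(x_1,y,\lambda_2)$ and $\phi(x_2)=G(x_2,y,\lambda_2)\geq G(x_2,y,\lambda_1)$; since $v\mapsto G(x,y,v)$ is strictly decreasing by \eqref{Monotonicity}, the first inequality gives $\lambda_1\leq \lambda_2$ and the second $\lambda_2\leq \lambda_1$. This gives well-definedness and, as a by-product, tells me that \emph{any} pair $(y,\lambda)$ supporting $\phi$ at any point must satisfy $\lambda=\psi(y)$ whenever $y\in\Tphi(\Omega)$.

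Next I would verify the identity $\phi(x)=\max_{y\in\Tphi(\Omega)} G(x,y,\psi(y))$ at every $x\in\Omega$. The inequality $\phi(x)\geq G(x,y,\psi(y))$ for each $y\in\Tphi(\Omega)$ is immediate from the defining supporting property of $\psi(y)$. For the reverse inequality, the $G$-convexity of $\phi$ produces some $(y_0,\lambda_0)\in Y\times\R$ with $\phi(x)=G(x,y_0,\lambda_0)$ and $\phi\geq G(\cdot,y_0,\lambda_0)$, whence $y_0\in\partial_G\phi(x)$; by well-definedness of $\psi$, as soon as $y_0\in\Tphi(\Omega)$ we get $\lambda_0=\psi(y_0)$ and hence $\phi(x)=G(x,y_0,\psi(y_0))$, realizing the max.

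The main obstacle is making sure that every $y_0$ arising in some $\partial_G\phi(x)$ actually lies in $\Tphi(\Omega)$. The map $\Tphi$ provided by Lemma~\ref{uniqueness-a-e} is only determined up to its values on the null set where $\partial_G\phi$ is multivalued, and \textit{a priori} it may miss finitely many relevant targets. Since $Y$ is finite, the set $Y':=\bigcup_{x\in\Omega}\partial_G\phi(x)$ is finite, and I can redefine $\Tphi$ on a finite set of points $\{x_y\}_{y\in Y'\setminus\Tphi(\Omega)}$ (with $x_y$ chosen so that $y\in\partial_G\phi(x_y)$) to force $\Tphi(\Omega)=Y'$ without affecting measurability or the a.e.\ singleton property. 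With this normalization the argument above closes and yields the claim.
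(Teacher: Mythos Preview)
Your approach coincides with the paper's: define $\psi(y)$ as the scalar $\lambda$ for which $G(\cdot,y,\lambda)$ supports $\phi$ at some preimage of $y$, and establish that this scalar is independent of the chosen preimage using the strict monotonicity \eqref{Monotonicity}. The well-definedness argument you give is exactly the one in the paper.

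Where you diverge is in the last step. To show the maximum is attained at every $x$, you pick an arbitrary support $(y_0,\lambda_0)$ at $x$ and then worry whether $y_0\in\Tphi(\Omega)$; your remedy is to modify $\Tphi$ on finitely many points so that $\Tphi(\Omega)=\bigcup_{x}\partial_G\phi(x)$. This is both unnecessary and problematic: the set $\Tphi(\Omega)$ appears in the statement, so enlarging it by redefining $\Tphi$ proves a different proposition. The paper sidesteps the issue by taking $y_0=\Tphi(x)$, which lies in $\Tphi(\Omega)$ tautologically. Since the map $\Tphi$ of Lemma~\ref{uniqueness-a-e} is (implicitly) a measurable selection from $\partial_G\phi$, one has $\Tphi(x)\in\partial_G\phi(x)$ for every $x$, hence $\phi(x)=G(x,\Tphi(x),\psi(\Tphi(x)))$ and the maximum is achieved without any modification of $\Tphi$. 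For the same reason, your requirement in the first paragraph that $\partial_G\phi(x_y)$ be the singleton $\{y\}$ is stronger than needed: $y\in\partial_G\phi(x_y)$ suffices, and this holds automatically for any $x_y\in\Tphi^{-1}(y)$.
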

\begin{proof}
	Let $y \in \Tphi(\Omega)$, then for any $x_0 \in \Tphi^{-1}(y)$ there exists $\lambda_0 \in \R$ such that $\phi(x_0) = G(x_0, y, \lambda_0)$. Since $\phi$ is $G$-convex we also have for any $x \in \Omega$ that $\phi(x) \geq G(x,y,\lambda_0)$. Specifically for $x_1 \in \Tphi^{-1}(y)$, we get $\phi(x_1) = G(x_1, y, \lambda_1) \geq G(x_1, y, \lambda_0)$ and since $\partial_vG(x,y,v) < 0$ then $\lambda_1 \leq \lambda_0$. By symmetry we have $\lambda_1 = \lambda_0$. We can deduce that there exists a unique $\psi(y) \in \R$ such that for any $x \in \Tphi^{-1}(y), \phi(x) = G(x,y,\psi(y))$. 
	This defines a map $\psi : \Tphi(\Omega) \to \R$ satisfying
	\[ \forall x \in \Omega,~ 
	\begin{cases}
	\forall y \in \Tphi(\Omega) , \phi(x) \geq G(x,y,\psi(y)) \\
	\exists y \in \Tphi(\Omega) , \phi(x) = G(x,y,\psi(y))
	\end{cases} 
	\]
	As a conclusion we have $\displaystyle \phi(x) = \max_{y \in \Tphi(\Omega)} G(x, y, \psi(y))$.
\end{proof}
\begin{corollary}
	\label{corollary-dual}
	Let $\phi$ be a $G$-convex function such that $\Tphi(\Omega) = Y$, then there exists $\psi : Y \to \R$ such that $\phi = \psi^G$.
\end{corollary}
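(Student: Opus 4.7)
The plan is to apply Proposition~\ref{prop-dual} essentially verbatim and then match the expression to the definition of the $G$-transform. Since $\phi$ is $G$-convex, Proposition~\ref{prop-dual} provides a function $\psi : \Tphi(\Omega) \to \R$ such that
\[
\forall x\in\Omega,\quad \phi(x) = \max_{y \in \Tphi(\Omega)} G(x,y,\psi(y)).
\]
The hypothesis $\Tphi(\Omega) = Y$ then identifies the domain of $\psi$ with the whole finite set $Y$, so the max above is exactly $\max_{y\in Y} G(x,y,\psi(y)) = \psi^G(x)$ by Definition of the $G$-transform. This yields $\phi = \psi^G$.

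The only thing worth checking is that no ambiguity arises from the domain on which $\psi$ is defined: $\Tphi(\Omega)\subseteq Y$ always holds by the definition of $\Tphi$, so the assumption $\Tphi(\Omega)=Y$ is precisely what is needed for $\psi$ to be defined on all of $Y$ and for the two maxima to coincide. No further assumption on $\phi$ or on the generating function beyond those already invoked in Proposition~\ref{prop-dual} (namely \eqref{Regularity}, \eqref{Monotonicity} and \eqref{Twist}) is required.

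There is no real obstacle here; the corollary is essentially a restatement of Proposition~\ref{prop-dual} under the surjectivity assumption on $\Tphi$. The only point I would take a moment to justify, if I wanted to be completely thorough, is the inclusion $\Tphi(\Omega)\subseteq Y$, which follows directly from Lemma~\ref{uniqueness-a-e} and the construction of $\Tphi$ as the (a.e.\ unique) selection of a point in $\partial_G\phi(x)\subseteq Y$.
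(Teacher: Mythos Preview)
Your proof is correct and matches the paper's approach exactly: the corollary is stated immediately after Proposition~\ref{prop-dual} with no separate proof, being an immediate consequence of that proposition under the surjectivity hypothesis $\Tphi(\Omega)=Y$. Your additional remarks about the inclusion $\Tphi(\Omega)\subseteq Y$ and the assumptions needed are accurate but more detailed than what the paper provides.
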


\begin{remark}[$G$-convex functions are not always $G$-transforms]
	\label{rem:Gtransform}
	Without any additional assumptions on the generating function,
        we cannot guarantee that any $G$-convex function $\phi$ on $X$
        is the $G$-transform of a function $\psi$ on $Y$. Define for
        instance
        $$\Omega = (1,2), \quad Y = \{0, 1\}, \quad G(x,y, v) = \begin{cases} xe^{-v} & \hbox{ if } y = 0\\
          -xv & \hbox{ if } y=1\end{cases}.$$
          and consider the function $\phi$ on $\Omega$ defined by $\phi(x) = G(x,1,1) = -x$, which is  $G$-convex by definition. Yet for any $v \in \R$ and any $x \in \Omega$,
          $$\max(G(x,0,v), G(x,1,1)) = \max(xe^{-v}, -x) = x e^{-v},$$
          thus implying that that there does not exist any $\psi: Y
          \to \R$ such that $\phi$ is the $G$-transform of $\psi$.

\end{remark}

Suppose that $\phi$ is a solution of \eqref{GenJac} and that for all
$y\in Y$, the mass $\nu_y$ is positive. Then, for any $y \in Y$ one
has $\mu(\Tphi^{-1}(y)) = \nu_y > 0$, which guarantees that
$\Tphi(\Omega) = Y$. Therefore by Corollary \ref{corollary-dual} there
exists a function $\psi$ on $Y$ such that $\phi = \psi^G$. This means
that we can reparametrize the problem \eqref{GenJac} by assuming that
the solution $\phi$ is the $G$-transform of some function $\psi$. The sets
$S_{\psi^G}^{-1}(\{y\})$, which appear in \eqref{GenJac} will be
called generalized Laguerre cells.

\begin{definition}[Generalized Laguerre cells] The \emph{generalized Laguerre cells}  associated to a function $\psi : Y \to \R$ are defined for every $y\in Y$ by
  \begin{equation}\label{Lagi}
\begin{aligned}
\Lag_y(\psi) &:= S_{\psi^G}^{-1}(\{y\})\\ 
&= \left\{ x \in \Omega \mid \forall z \in Y , G(x, y, \psi(y)) \geq G(x, z, \psi(z))\right\}.
\end{aligned}
\end{equation}
\end{definition}

Note that by Lemma~\ref{uniqueness-a-e}, the intersection of two generalized
Laguerre cells has zero Lebesgue measure, ensuring that the sets
$\Lag_y(\psi)$ form a partition of $\Omega$ up to a $\mu$-negligible set. 

\begin{definition}[Alexandrov solution to GJE]
  A function $\psi : Y \to \R$ is an \emph{Alexandrov solution} to the generated Jacobian equation between  generated Jacobian equation between a probability density $\mu$ on $\Omega$ and a probability measure $\nu = \sum_{y\in Y} \nu_y\delta_{y}$ on $Y$ if $\psi^G$ is a Brenier solution (Definition~\ref{def:brenier}) to the same GJE, or equalently if
  $$ 	\forall y\in Y,~~ H_y(\psi) = \nu_y, \quad \hbox{ where } H_y(\psi) = \mu(\Lag_y(\psi)).$$
  Setting $H(\psi) = (H_y(\psi))_{y \in Y}$ and considering $\nu$ as a function over $Y$, we can even rewrite this equation as
  \begin{equation}
    \label{GenJac'}
    H(\psi) = \nu.
	\tag{GenJacD}
\end{equation}
\end{definition}

\section{Resolution of the generated Jacobian equation}\label{sec:num}
The goal of this section is to introduce and study a Newton algorithm
to solve the semi discrete generated Jacobian
equation~\eqref{GenJac'}. Before doing so, we study the regularity of
the mass function $H:\R^Y\to \R^Y$ in Section~\ref{section:regularity}
and establish a non-degeneracy property of its differential $\D H$ in
Section~\ref{section:differential}, under a connectedness assumption
on the support of the source measure. We present the algorithm and
prove its convergence in Section~\ref{sec:algo}.

For simplicity, we will number the points in $Y$, i.e. we assume that $$Y = \{ y_1,\hdots,y_N\},$$ where the points $y_i$ are distinct. This allows us to  identify the set of
functions $\R^Y$ with $\R^N$, by setting $\psi_i = \psi(y_i)$. We also
denote $(e_i)_{1 \leq i \leq N}$ the canonical basis of $\R^N$. Finally, we introduce a shortened notation for Laguerre cells and intersections thereof
$$\Lag_i(\psi) = \Lag_{y_i}(\psi),\quad \Lag_{ij}(\psi) = \Lag_i(\psi)
\cap \Lag_j(\psi).$$ Throughout this section, we assume that the
generating function $G$ satisfies all the conditions of
Definition~\ref{defG}.

\subsection{$\mathcal{C}^1$-regularity of $H$}\label{section:regularity}
The differentiability of $H$ is established under a (mild) genericity hypothesis on the cost function, ensuring in particular that the intersection between three distinct Laguerre cells is negligible with respect to the $(d-1)$-dimensional Hausdorff measure, denoted $\mathcal{H}^{d-1}$. To write this hypothesis,  we denote for three distinct indices  $i,j,k$ in $\OneN$,
$$\Gamma_{ij}(\psi) = \{x \in \Omega \mid G(x, y_i, \psi_i) = G(x, y_j, \psi_j) \},\quad \Gamma_{ijk}(\psi) = \Gamma_{ij}(\psi) \cap \Gamma_{ik}(\psi).$$

\begin{definition}[Genericity of the generating function.]
	\label{Ygeneric}
        The  generating function $G$ is generic with respect to $\Omega$ and $Y$ if for any distinct indices $i,j,k$ in $\OneN$  and any $\psi \in \R^N$ we have
        \begin{equation} \label{GenOmega}\tag{$\mathrm{Gen}_\Omega^Y$}
	  \mathcal{H}^{d-1}(\Gamma_{ijk}(\psi)) = 0.
        \end{equation}
The generating function $G$ is generic with respect to the boundary
$\partial X$ and $Y$ if for any distinct indices $i,j$ in $\OneN$ and any $\psi \in \R^N$ we
have
\begin{equation}
  \label{GenX}\tag{$\mathrm{Gen}_{\partial X}^Y$}
  \mathcal{H}^{d-1}(\Gamma_{ij}(\psi) \cap \partial X) = 0.
\end{equation}
\end{definition}

\begin{proposition}
  \label{Hi_diff}
  Assume that
  \begin{itemize}
    \item $G \in \Class^2(\Omega\times Y \times \R)$ satisfies
  \eqref{Regularity}, \eqref{Monotonicity}, \eqref{Twist},
  \eqref{GenOmega}, \eqref{GenX},
  \item  $X \subseteq \Omega$ is compact
    and that $\rho$ is a continuous probability density on $X$.
  \end{itemize}
  Then the
  mass function $H : \R^N \to \R^N $ defined by $H(\psi) =
  (\mu(\Lag_i(\psi)))_{1\leq i \leq N}$ has class
  $\mathcal{C}^1$. We have for $\psi \in \R^N$ and $i \in \llbracket
  1,N \rrbracket$
	\begin{equation}
	\label{partialH}
	\begin{cases}
	\displaystyle \frac{\partial H_j}{\partial \psi_i}(\psi) = \int_{\Lag_{ij}(\psi)} \rho (x)\frac{|\partial_vG(x,y_i,\psi_i )|}{\Vert\nabla_xG(x,y_j,\psi_j) - \nabla_xG(x,y_i,\psi_i )\Vert}d\mathcal{H}^{d-1}(x) \geq 0 \text{ for } j \neq i \\
	\displaystyle \frac{\partial H_i}{\partial \psi_i}(\psi) = - \sum_{j\neq i} \frac{\partial H_j}{\partial \psi_i}(\psi) 
	\end{cases}
	\end{equation}
\end{proposition}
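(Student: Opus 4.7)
The plan is to compute, for each pair $i \neq j$, the partial derivative $\partial H_j / \partial \psi_i$ by a coarea-type argument localized along the interface between $\Lag_i(\psi)$ and $\Lag_j(\psi)$. Fix $\psi \in \R^N$, indices $i \neq j$, and $t \in \R$ small. Since only $\psi_i$ is perturbed, a point $x \in \Lag_k(\psi)$ with $k \notin \{i,j\}$ can only enter or leave $\Lag_j$ by transiting through a triple interface $\Gamma_{ijk}(\psi)$; by~\eqref{GenOmega} the $(d-1)$-Hausdorff measure of these sets is zero, so one shows that these points contribute $o(t)$ to $H_j(\psi + t e_i) - H_j(\psi)$. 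Hence up to a negligible remainder, the symmetric difference $\Lag_j(\psi + t e_i) \triangle \Lag_j(\psi)$ is contained in a thin tube around $\Lag_{ij}(\psi)$.

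On $\Lag_{ij}(\psi)$ minus triple intersections, the twist condition~\eqref{Twist}, applied to the two distinct points $(y_i,\psi_i)$ and $(y_j,\psi_j)$ which already share the value $G(x,y_i,\psi_i) = G(x,y_j,\psi_j)$, forces $\nabla_x G(x,y_j,\psi_j) - \nabla_x G(x,y_i,\psi_i) \neq 0$. Hence $F(x) := G(x,y_i,\psi_i) - G(x,y_j,\psi_j)$ is a $\Class^1$ submersion near the interface and $\Gamma_{ij}(\psi)$ is locally a smooth hypersurface. A first-order Taylor expansion of $F_t(x) := G(x,y_i,\psi_i+t) - G(x,y_j,\psi_j)$, using $\partial_v G < 0$ from~\eqref{Monotonicity}, identifies the sliver added to $\Lag_j$ when $t>0$ as
\[
\{x \in \Lag_{ij}(\psi) : 0 \leq F(x) \leq t\, |\partial_v G(x,y_i,\psi_i)| + o(t)\}.
\]
The coarea formula applied to $F$ then yields exactly the announced expression for $\partial H_j/\partial \psi_i$; a symmetric analysis treats $t<0$. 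The condition~\eqref{GenX} is used to ensure that where $\Gamma_{ij}(\psi)$ meets $\partial X$, only an $\mathcal{H}^{d-1}$-negligible set is produced, so the coarea identity is not polluted by boundary terms.

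Because $(\Lag_k(\psi))_{k}$ partition $\Omega$ up to a $\mu$-negligible set by Lemma~\ref{uniqueness-a-e}, and $\mu$ is supported on $X$ with $\mu(X)=1$, one has $\sum_k H_k(\psi) \equiv 1$, whence $\partial H_i / \partial \psi_i = -\sum_{j \neq i} \partial H_j / \partial \psi_i$, which is the diagonal formula in~\eqref{partialH}. For the $\Class^1$ regularity, I would argue that the integrand in the off-diagonal formula depends continuously on $(x,\psi)$ (by $G \in \Class^2$ together with~\eqref{Regularity}), that the interface $\Lag_{ij}(\psi)$ varies continuously with $\psi$ in Hausdorff distance under~\eqref{GenOmega}--\eqref{GenX}, and then close by dominated convergence on the surface integral.

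The main technical obstacle is the first step: rigorously controlling the contribution of the triple-interface sets to $H_j(\psi + t e_i) - H_j(\psi)$, i.e.\ showing that the $\mu$-mass of the points where more than two competitors realize the maximum within a band of width $O(t)$ is $o(t)$, not merely $O(t)$. This is precisely what~\eqref{GenOmega} is designed to guarantee, and without it the coarea reduction would fail to isolate a single pairwise interface as the leading-order contribution to the derivative.
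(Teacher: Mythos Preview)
Your overall strategy---coarea along the pairwise interface, mass conservation for the diagonal, then continuity of the resulting surface integrals---matches the paper's. But two technical points diverge in ways worth noting.

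\textbf{The sliver is exact, not ``up to $o(t)$''.} You treat the triple-interface contribution as something to be estimated and discarded, and call this ``the main technical obstacle.'' In fact it is not an obstacle at all. The paper introduces the set $L = \{x \in \Omega : G(x,y_j,\psi_j) \geq G(x,y_k,\psi_k) \text{ for all } k \neq i\}$, which does not move when only $\psi_i$ is perturbed. One checks directly from \eqref{Monotonicity} that $\Lag_j(\psi + te_i)\setminus\Lag_j(\psi) = L \cap u_{ij}^{-1}((0,t])$ \emph{exactly}, where $u_{ij}(x)$ is defined implicitly by $G(x,y_j,\psi_j)=G(x,y_i,\psi_i+u_{ij}(x))$ (well-defined and $\Class^1$ by the implicit function theorem, since $\partial_v G \neq 0$). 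No triple interface ever enters the difference, and \eqref{GenOmega} is not used here. Applying coarea to $u_{ij}$---rather than to your static function $F(x)=G(x,y_i,\psi_i)-G(x,y_j,\psi_j)$---avoids the awkwardness that your sliver $\{0 \leq F(x) \leq t|\partial_v G(x,y_i,\psi_i)|\}$ has an $x$-dependent upper bound and is therefore not a plain sublevel set of $F$. The gradient $\nabla u_{ij}$ is exactly $(\nabla_x G(\cdot,y_j,\psi_j)-\nabla_x G(\cdot,y_i,\psi_i))/\partial_v G(\cdot,y_i,\psi_i)$ on $\Gamma_{ij}(\psi)$, which yields the integrand in \eqref{partialH} directly.

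\textbf{Where the genericity conditions actually enter.} You invoke \eqref{GenOmega} and \eqref{GenX} for the derivative formula; the paper does not need them there. They are used only to prove that the partial derivatives $H_{ij}(\psi)$ are \emph{continuous} in $\psi$---this is the real technical work, handled separately (Proposition~\ref{hcont}). Your sketch (``the interface varies continuously in Hausdorff distance, close by dominated convergence'') is in the right spirit but underestimates the difficulty: the integrals live on different hypersurfaces $\Gamma_{ij}(\psi^n)$, so one must first build a $\Class^1$ diffeomorphism $F_n:\Gamma_{ij}(\psi^\infty)\to\Gamma_{ij}(\psi^n)$ (the paper does this via flows of two vector fields tangent to the zero set of $G(\cdot,y_j,\psi_j+a)-G(\cdot,y_i,\psi_i+b)$), pull back, and then argue pointwise convergence of the indicator of $\Lag_{ij}(\psi^n)\cap X$. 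It is precisely in this last pointwise-convergence step that \eqref{GenOmega} and \eqref{GenX} are used, to show the exceptional set $S_{ij}=\bigcup_{k\neq i,j}\Gamma_{ijk}(\psi^\infty)\cup(\Gamma_{ij}(\psi^\infty)\cap\partial X)$ is $\mathcal{H}^{d-1}$-null.
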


\begin{proof}
	Let $\psi \in \R^N$ and $i,j \in \OneN$ be fixed indices such that $i \neq j$. 
 	We want to compute $\partial H_j/\partial \psi_{i}(\psi)$. For this purpose, we introduce
        $\psi^t = \psi + te_{i}$ for $t\geq 0$.
	From \eqref{Monotonicity}, we obviously have $\Lag_j(\psi) \subseteq \Lag_j(\psi^t)$. 
	Therefore 
	$$H_j(\psi^t) - H_j(\psi) = \mu(\Lag_j(\psi^t)) - \mu(\Lag_j(\psi) ) = \mu(\Lag_j(\psi^t) \setminus \Lag_j(\psi) )$$
	We introduce the set $L$ obtained by removing one inequality in the definition of
        the generalized Laguerre cell $\Lag_j(\psi)$:
        $$L = \{x \in \Omega \mid \forall k \neq i , G(x,y_j, \psi_j)
        \geq G(x,y_k,\psi_k)\}.$$ We have in particular $\Lag_j(\psi)
        \subseteq L$ and more precisely
        $$ \Lag_j(\psi^t) \setminus \Lag_j(\psi) = \bigsqcup_{0 < s \leq t} L \cap \Gamma_{ij}(\psi^t).$$
        We will use this formula to get another expression of $H_j(\psi^t) - H_j(\psi)$.
        \medskip

        \noindent \textbf{Step 1. Construction of $u_{ij}$ such that $\Gamma_{ij}(\psi^t) = u_{ij}^{-1}(\{t\})$.}
        To construct such a function $u_{ij}:\Omega\to\R$, we first consider  the function $f_{ij} : \Omega \times \R \to
            \R$ defined by
	\begin{equation*}
	f_{ij}(x,t) = G(x,y_j,\psi_j) - G(x,y_i,\psi_i +t)
	\end{equation*}
        This function $f_{ij}$ is of class $\mathcal{C}^1$ on $\Omega \times \R$ by hypothesis on $G$ and we have 
        $$ \forall (x,t) \in \Omega \times \R,  \frac{\partial f_{ij}}{\partial t} (x,t) = -\partial_vG(x,y_i,\psi_i + t) > 0.$$
        This implies that a fixed $x \in\Omega$, the function $f_{ij}(x,\cdot)$ is strictly increasing, so that equation $f_{ij}(x,t) = 0$ has at most one solution. Denoting
        $$\mathcal{V}_{ij} = \{ x\in \Omega\mid \exists t\in\Rsp, 
        f_{ij}(x,t) = 0\} = \bigcup_{t \in \R} \Gamma_{ij}(\psi^t),$$
        one can therefore define a function $u_{ij}: \mathcal{V}_{ij}
        \to \Rsp$  which satisfies 
$$ \forall x\in \mathcal{V}_{ij},~~ f_{ij}(x, t) = 0 \Longleftrightarrow u_{ij}(x) = t.$$
	By the implicit function theorem, the set $\mathcal{V}_{ij}$ is open and the function $u_{ij}$ is  $\mathcal{C}^1$ on $\mathcal{V}_{ij}$.
        In order to apply the co-area formula, we need to compute the gradient of $u_{ij}$.
	For any point $x$ in $\mathcal{V}_{ij}$, we have by definition
        $$f_{ij}(x, u_{ij}(x)) = G(x,y_j,\psi_j) - G(x,y_i,\psi_i + u_{ij}(x)) = 0.$$
        Differentiating this expression with respect to $x$, we obtain
	$$\nabla u_{ij}(x) = \frac{\nabla_xG(x,y_j,\psi_j) -
          \nabla_xG(x,y_i,\psi_i +
          u_{ij}(x))}{\partial_vG(x,y_i,\psi_i +u_{ij}(x))}$$ which is
        well defined since $\partial_vG(x,y_i,v) < 0$ on $\Omega
        \times Y \times \R$ by the \eqref{Monotonicity} hypothesis.
        The \eqref{Twist} condition guarantees that for all $x \in
        \mathcal{V}_{ij}$, the map $(y,v) \mapsto
        (G(x,y,v),\nabla_xG(x,y,v))$ is injective. By definition of
        $u_{ij}$ we have $f_{ij}(x, u_{ij}(x))$, so that
        $$G(x,y_j,\psi_j) = G(x,y_i,\psi_i + u_{ij}(x)).$$
        The \eqref{Twist} condition then entails $$\nabla_xG(x,y_j,\psi_j) \neq \nabla_xG(x,y_i,\psi_i + u_{ij}(x)),$$
        implying that the gradient $\nabla u_{ij}(x)$ does not vanish. \medskip

        \noindent \textbf{Step 2. Computation of the partial derivatives.}
        We can write the difference between Laguerre cells using the function $u_{ij}$:
	\begin{align*}
		\Lag_j(\psi^t) \setminus \Lag_j(\psi) &= \bigcup_{0 < s \leq t} \Lag_{ij}(\psi^s) \\
		&= \left\{ x \in \Omega , \exists s \in ]0,t], f_{ij}(x,s) = 0 \right\} \cap L \\
		&= \left\{ x \in \Omega , \exists s \in ]0,t], u_{ij}(x) = s \right\} \cap L \\
		&= u_{ij}^{-1} (]0,t]) \cap L,
	\end{align*}
	giving directly
	$$H_j(\psi^t) - H_j(\psi) = \mu(L \cap u_{ij}^{-1}(]0,t])) = \int_{L \cap u_{ij}^{-1}(]0,t])} \rho (x) dx.$$
	Then the co-area formula gives us
	$$\frac{H_j(\psi^t) - H_j(\psi)}{t} = \frac{1}{t}  \int_{L \cap u_{ij}^{-1}(]0,t])} \rho (x) dx= \frac{1}{t} \int_{0}^t H_{ij}(\psi^s)ds, $$
	where we introduced
	\begin{equation}
		\label{hij}
		\displaystyle H_{ij}(\psi) = \int_{\Lag_{ij}(\psi)} \frac{\rho (x)}{\Vert\nabla u_{ij}(x)\Vert}d\mathcal{H}^{d-1}(x).
	\end{equation}
        Note that thanks to the computations above, we already know
        that the gradient $\nabla u_{ij}(x)$ does not
        vanish. Moreover, for any $x$ in $\Lag_{ij}(\psi)\subseteq \Gamma_{ij}(\psi)$, one has 
        $u_{ij}(x) = 0$. Thus,
        $$\nabla u_{ij}(x) =
        (\nabla_xG(x,y_j,\psi_j) - \nabla_xG(x,y_i,\psi_i )) /
        (\partial_vG(x,y_i,\psi_i )). $$
        We can therefore rewrite 
\begin{equation}
		\displaystyle H_{ij}(\psi) = \int_{\Lag_{ij}(\psi)} \frac{\rho (x) |\partial_vG(x,y_i,\psi_i)|}{\vert\nabla_xG(x,y_j,\psi_j) - \nabla_xG(x,y_i,\psi_i)\vert} d\mathcal{H}^{d-1}(x).
\end{equation}
As shown in Proposition~\ref{hcont} below, $H_{ij}$ is continuous on $\R^N$. 
	We deduce that
	\begin{equation}
	\label{dH}
	\frac{\partial H_j}{\partial \psi_i}(\psi) = \lim_{t \to 0,t>0} \frac{H_j(\psi^t) - H_j(\psi)}{t} = H_{ij}(\psi) \geq 0.
	\end{equation}
	The case $t < 0$ can be treated similarly by replacing
        $\Lag_j(\psi) \subseteq \Lag_j(\psi^t)$ with $\Lag_j(\psi^t)
        \subseteq \Lag_j(\psi)$. We thus get the desired expression
        for the partial derivative $\partial H_j/\partial \psi_i$ for
        $i\neq j$.

	To compute the partial derivative for $j=i$, we use the mass conservation property
        $\sum_{1\leq i\leq N} H_i(\psi) = 1$ to  deduce that \begin{equation*}
          \frac{\partial H_i}{\partial \psi_i}(\psi) =- \sum_{j\neq i} \frac{\partial H_j}{\partial \psi_i}(\psi). \qedhere\end{equation*}
\end{proof}

It remains to show that the functions $H_{ij}$ used in the proof of Proposition \ref{Hi_diff} are continuous.

\begin{proposition}
	\label{hcont}
	Under the assumptions of Proposition~\ref{Hi_diff}, 
	for every $i,j \in \OneN$, the function $H_{ij}$ defined in \eqref{hij} is continuous on $\R^N$. 
\end{proposition}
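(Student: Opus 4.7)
The plan is to reduce the continuity of $H_{ij}$ to an application of the dominated convergence theorem, after flattening the family of hypersurfaces $\Gamma_{ij}(\psi)$ to a fixed $(d-1)$-dimensional parameter domain via the implicit function theorem.

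Fix $\psi_0\in\R^N$ and set $F_{ij}(\psi,x) := G(x,y_j,\psi_j) - G(x,y_i,\psi_i)$, whose zero set at $\psi$ is $\Gamma_{ij}(\psi)$. The proof of Proposition~\ref{Hi_diff} already shows that~\eqref{Twist} forces $\nabla_x F_{ij}(\psi,x)\neq 0$ for every $x\in\Gamma_{ij}(\psi)$; combined with the $\mathcal{C}^2$-regularity of $G$, the implicit function theorem provides, around any point of the compact set $\Gamma_{ij}(\psi_0)\cap X$, a $\mathcal{C}^1$ graph parametrization $\phi(\psi,\cdot):W\subset\R^{d-1}\to\Omega$ of $\Gamma_{ij}(\psi)$ that depends continuously on $\psi$ in a neighborhood of $\psi_0$. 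I would then cover $\Gamma_{ij}(\psi_0)\cap X$ by finitely many such charts $(B_\alpha,\phi_\alpha,W_\alpha)$ with a subordinate partition of unity $(\theta_\alpha)$; a short Hausdorff-continuity argument based on the continuity of $F_{ij}$ and the compactness of $X$ then shows that $\Gamma_{ij}(\psi)\cap X\subseteq\bigcup_\alpha B_\alpha$ for all $\psi$ close enough to $\psi_0$.

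Writing $g(\psi,x)$ for the integrand of $H_{ij}$ and $J_\alpha(\psi,\cdot)$ for the Jacobian area element of $\phi_\alpha(\psi,\cdot)$, the area formula rewrites
\[
H_{ij}(\psi) = \sum_\alpha \int_{W_\alpha} \theta_\alpha(\phi_\alpha(\psi,x'))\, g(\psi,\phi_\alpha(\psi,x'))\, \chi_{\Lag_{ij}(\psi)}(\phi_\alpha(\psi,x'))\, J_\alpha(\psi,x')\, dx'.
\]
All factors except the indicator are jointly continuous in $(\psi,x')$ on a neighborhood of $\{\psi_0\}\times\overline{W_\alpha}$ (the denominator in $g$ stays bounded below on $\Gamma_{ij}$ by~\eqref{Twist}), and are therefore uniformly bounded for $\psi$ close to $\psi_0$.

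The main obstacle is the pointwise convergence of the indicator $\chi_{\Lag_{ij}(\psi)}\circ\phi_\alpha(\psi,\cdot)$. At a point $x=\phi_\alpha(\psi_0,x')$, this indicator converges to its $\psi_0$-limit as soon as $x$ lies neither on any triple intersection $\Gamma_{ijk}(\psi_0)$ for $k\notin\{i,j\}$ nor on $\partial X$: in both cases the defining inequalities of $\Lag_{ij}$ are strict at $x$ and remain so for $\psi$ close to $\psi_0$ by continuity. By~\eqref{GenOmega} and~\eqref{GenX}, the union of these exceptional sets has vanishing $\mathcal{H}^{d-1}$-measure on $\Gamma_{ij}(\psi_0)$, hence pulls back to a Lebesgue-null set in each $W_\alpha$ via the $\mathcal{C}^1$ diffeomorphism $\phi_\alpha(\psi_0,\cdot)$. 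Dominated convergence chart by chart then gives $H_{ij}(\psi)\to H_{ij}(\psi_0)$ as $\psi\to\psi_0$, which is the desired continuity.
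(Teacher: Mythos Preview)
Your approach is correct and follows the same overall strategy as the paper: reduce to a fixed integration domain and apply dominated convergence, using \eqref{GenOmega} and \eqref{GenX} to handle the discontinuity set of the indicator. The construction of the correspondence between the moving hypersurfaces $\Gamma_{ij}(\psi)$ is, however, genuinely different. You parametrize $\Gamma_{ij}(\psi)$ locally as a graph via the implicit function theorem and patch the local charts with a partition of unity; the paper instead builds a single global diffeomorphism $F_n:\Gamma_{ij}(\psi^\infty)\cap\widetilde\Omega\to\Gamma_{ij}(\psi^n)$ by composing the flows of two $\Class^1$ vector fields tangent to the level sets of $f(a,b,x)=G(x,y_j,\psi_j^\infty+a)-G(x,y_i,\psi_i^\infty+b)$, one flowing in the parameter $a$ and one in $b$. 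Your route is arguably more elementary (no ODE theory, only the implicit function theorem) at the cost of partition-of-unity bookkeeping; the paper's flow construction avoids charts altogether and yields a single change of variables, but relies on the $\Class^2$ regularity of $G$ so that the vector fields are $\Class^1$ and their flows converge to the identity in $\Class^1$. Both arguments hinge on exactly the same point once the integrals live on a fixed domain: the pointwise convergence of $\chi_{\Lag_{ij}(\psi)}$ and $\chi_X$ (composed with the moving parametrization) can fail only on the $\mathcal{H}^{d-1}$-null set $S_{ij}=\bigl(\bigcup_{k\ne i,j}\Gamma_{ijk}(\psi_0)\bigr)\cup\bigl(\Gamma_{ij}(\psi_0)\cap\partial X\bigr)$, after which dominated convergence finishes the proof.
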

\begin{proof}
	We introduce the function $g : \Omega \times  \R^N  \to \R$ defined by
	 \[ g(x,\psi) = \bar{\rho}(x) \frac{|\partial_vG(x,y_i,\psi_i )|}{\Vert\nabla_xG(x,y_j,\psi_j) - \nabla_xG(x,y_i,\psi_i )\Vert} \]
	 where $\bar{\rho}$ is a continuous extension of the probability density $\rho_{|X}$ on $\Omega$.
	 For a given $\psi \in \R^N$, the \eqref{Twist} hypothesis guarantees that for any $x \in \Gamma_{ij}(\psi)$, $\nabla_xG(x,y_j,\psi_j) \neq \nabla_xG(x,y_i,\psi_i)$.
	 This implies that $g$ is continuous on a neighborhood of the set $\{ (x, \psi) \in \Omega \times \R^N | x \in \Gamma_{ij}(\psi) \}$. 
	 We introduced in Proposition~\ref{Hi_diff} the function
	 \[ H_{ij}(\psi) = \int_{\Lag_{ij}(\psi) \cap X } g(x,\psi) d\mathcal{H}^{d-1}(x). \]
	 Let $\psi^\infty \in \R^N$ and $\psi^n$ a sequence converging towards $\psi^\infty$.The main difficulty for proving that $H_{ij}(\psi^n)$ converges to $H_{ij}(\psi^\infty)$ as $n\to+\infty$ is that the integrals in the definition of $H_{ij}(\psi^n)$ and $H_{ij}(\psi^\infty)$ are over different hypersurfaces, namely $\Gamma_{ij}(\psi^n)$ and $\Gamma_{ij}(\psi^\infty)$. Our first step will therefore be to construct a diffeomorphism between (subsets) of these hypersurfaces. We introduce  $f : \R \times \R \times \Omega  \to \R$ the function defined by
	 \[ f(a,b,x) = G(x,y_j,\psi^\infty_j + a) - G(x,y_i, \psi^\infty_i + b)\]
	 We put $a_n = \psi_j^n - \psi_j^\infty$ and $b_n = \psi_i^n - \psi_i^\infty$, so that  $a_n \to 0$ and $b_n \to 0$ as $n$ tends to $+\infty$. We also have
         $$\Gamma_{ij}(\psi^\infty) = (f(0,0, \cdot))^{-1}(0), \quad \Gamma_{ij}(\psi^n) = (f(a_n,b_n, \cdot))^{-1}(0).$$

         
	 \noindent \textbf{Step 1: Construction of a map $F_n$ between $\Gamma_{ij}(\psi^\infty)$ and $\Gamma_{ij}(\psi^n)$.}\\
	 This map is constructed using the composition of the flows associated to two vector fields $X_a$ and $X_b$.
	 Let $\widetilde{\Omega} \subset \Omega$ an open domain containing $X$. 
	 The \eqref{Twist} hypothesis guarantees that there exists a neighborhood $\widetilde{V}$ of the set $\{ (a,b,x) \in \R^2 \times \widetilde{\Omega} | f(a,b,x) = 0 \}$ such that we have for any $v \in \widetilde{V}$, $\nabla_x f(v) \neq 0$.
	 We can then define two vector fields $X_a, X_b$ on $\widetilde{V}$ by
	 \begin{align*}
	 X_a (a,b,x) &= \left( 1, 0, -\partial_a f (a, b, x) \frac{\nabla_x f(a, b, x)}{\| \nabla_x f(a, b, x) \|^2} \right) \\
	 X_b (a,b,x) &= \left( 0, 1, -\partial_b f (a, b, x) \frac{\nabla_x f(a, b, x)}{\| \nabla_x f(a, b, x) \|^2} \right)
	 \end{align*}
	 Since $f$ is of class $\Class^2$, $X_a$ and $X_b$ are both of class $\Class^1$ on $\widetilde{V}$. We then consider $\Phi_a$ and $\Phi_b$ the flows associated respectively to $X_a$ and $X_b$ defined for  $(t,v) \in [-\varepsilon, \varepsilon]^2 \times \widetilde{V}$ by
	 \[
	 \begin{cases}
	 	\Phi_a(0, v) = v \\
	 	\partial_t \Phi_a(t, v) = X_a(\Phi(t,v))
	 \end{cases}
	 \]
	 and
	 \[
	 \begin{cases}
	 	\Phi_b(0, v) = v \\
	 	\partial_t \Phi_b(t, v) = X_b(\Phi(t,v))
	 \end{cases}
	 \]
	 The vector fields $X_a$ and $X_b$ are continuously differentiable on $\widetilde{V}$ which implies that both $\Phi_a(t, \cdot)$ and $\Phi_b(t, \cdot)$ converge pointwise in the $\Class^1$ sense toward the identity as $t \to 0$.
	 Let $(t,v) \in [-\varepsilon, \varepsilon] \times \widetilde{V}$. Denoting $\nabla f (v) = (\partial_a f, \partial_b f, \nabla_x f)(v)$, we then have
	 \begin{align*}
	 	f(\Phi_a(t,v)) &= f(\Phi_a(0,v)) + \int_0^t \frac{\partial}{\partial s} \big(s \mapsto f(\Phi_a(s,v)) \big) ds\\
	 	&= f(v) + \int_0^t \sca{\nabla f(\Phi_a(s,v))}{\partial_t \Phi_a(s,v)} ds\\
	 	&= f(v) + \int_0^t \sca{\nabla f(\Phi_a(s,v))}{X_a(\Phi_a(s,v))} ds \\
	 	&= f(v)
	 \end{align*}
	 Similarly one has $f(\Phi_b(t,v)) = f(v)$.
	 Let $\Pi : \widetilde{V} \to \Omega$ the projection of $\widetilde{V} \subseteq \Rsp^2\times\Omega$ on $\Omega$, and let $F_n : \Gamma_{ij}(\psi^\infty) \cap \widetilde{\Omega} \to \Omega$ be the function defined by
	 \[ F_n(x) = \Pi \big(\Phi_a(a_n,\Phi_b(b_n,(0,0,x))) \big).\]
	 For $x \in \Gamma_{ij}(\psi^\infty)$ and $v = (0,0,x) \in \widetilde{V}$,  we have
	 \[ \Phi_a(a_n,\Phi_b(b_n,v)) = (a_n, b_n, F_n(x))\]
	 and from the previous equality we deduce that
	 \[f(\Phi_a(a_n,\Phi_b(b_n,v))) = f(v) = 0 \]
	 This means that for $x \in \Gamma_{ij}(\psi^\infty)$, $F_n(x) \in \Gamma_{ij}(\psi^n)$.
	 Moreover $\Phi_a(a_n, \cdot)$ and $\Phi_b(b_n, \cdot)$ are both invertible of inverse $\Phi_a(-a_n, \cdot)$ and $\Phi_b(-b_n, \cdot)$. Thus $F_n$ is also invertible of inverse 
	 \[ F_n^{-1}(x) = \Pi \big( \Phi_b(-b_n, \Phi_a(-a_n, (a_n,b_n,x) ) ) \big)\]
	 \\
	 Since both $\Phi_a(a_n, \cdot)$ and $\Phi_b(b_n, \cdot)$ converge pointwise in the $\Class^1$ toward the identity as $n \to + \infty$, we have for $x \in \Gamma_{ij}(\psi^\infty) \cap \widetilde{\Omega}$
	 \[
	 \begin{cases}
	 	\displaystyle \lim_{n \to + \infty} F_n(x) = x, \\
	 	\displaystyle \lim_{n \to + \infty} JF_n(x) = 1,
	 \end{cases} 
	 \]
         where  $JF_n$ is the absolute value of the determinant of the Jacobian matrix of $F_n$.
         
	 \textbf{Step 2: Convergence of $H_{ij}(\psi^n)$ toward $H_{ij}(\psi^\infty)$.}\\
	 We let $L_\infty = \Lag_{ij}(\psi^\infty)$ and $L_n = F_n^{-1}(\Lag_{ij}(\psi^n) \cap \widetilde{\Omega})$. 
	 Denoting by $\chi_A$ the indicator function of a set $A$, we have
	 \begin{align*}
		 H_{ij}(\psi^\infty) &= \int_{\Gamma_{ij}(\psi^\infty)} g(x,\psi^\infty) \chi_X(x)\chi_{L_\infty}(x) d\mathcal{H}^{d-1}(x) \\
		 &= \int_{\Gamma_{ij}(\psi^\infty) \cap  \widetilde{\Omega}} g(x,\psi^\infty) \chi_X(x)\chi_{L_\infty}(x) d\mathcal{H}^{d-1}(x)
	 \end{align*}
	 because $\Gamma_{ij}(\psi^\infty) \cap X \subset \widetilde{\Omega}$.
	 We also have
	 \[ H_{ij}(\psi^n) = \int_{\Gamma_{ij}(\psi^n)} g(x,\psi^n) \chi_X(x)\chi_{\Lag_{ij}(\psi^n)}(x) d\mathcal{H}^{d-1}(x) \]
	 By a change of variable from $x$ to $F_n(x)$, the latter equality becomes
	 \[ H_{ij}(\psi^n) = \int_{\Gamma_{ij}(\psi^\infty) \cap  \widetilde{\Omega}} g(F_n(x),\psi^n) JF_n(x) \chi_X(F_n(x))\chi_{L_n}(x) d\mathcal{H}^{d-1}(x) \]
	 where $JF_n(x)$ denotes the determinant of the Jacobian
         matrix of $F_n$.  We already have the pointwise convergences
         $F_n(x) \to x$ and $JF_n(x) \to 1$ as $n \to \infty$.  If we
         can show that
	 \[ \lim_{n \to + \infty} \chi_X(F_n(x))\chi_{L_n}(x) =  \chi_X(x)\chi_{L_\infty}(x) \]
         for $\Haus^{d-1}$ almost every point $x$, 
	 then using Lebesgue's dominated convergence theorem, we will obtain that $H_{ij}(\psi^n) \to H_{ij}(\psi^\infty)$.

         We first show that $\lim_{n\to +\infty}\chi_{L_n}(x) \to \chi_{L_\infty}(x)$ $\Haus^{d-1}$-almost everywhere on $\Gamma_{ij}(\psi^\infty) \cap \widetilde{\Omega}$.
	 We first consider the superior limit: given  $x \in \Gamma_{ij}(\psi^\infty) \cap \widetilde{\Omega}$, we prove that $\limsup_{n \to \infty} \chi_{L_n}(x) \leq \chi_{L_\infty}(x)$. 
	The limsup is non-zero if and only if there exists a subsequence $(\sigma(n))_{n \in \N}$ such that $\forall n \in \N, x \in L_{\sigma(n)}$. 
	In this case we have $F_{\sigma(n)}(x) \in F_{\sigma(n)}(L_{\sigma(n)}) = \Lag_{ij}(\psi^{\sigma(n)}) \cap \widetilde{\Omega}$. 
	This means that for any $k \neq i,j$
	\[ G(F_{\sigma(n)}(x),y_i,\psi_i^{\sigma(n)}) = G(F_{\sigma(n)}(x), y_j, \psi_j^{\sigma(n)}) \leq G(x, y_k, \psi_k^{\sigma(n)}) \]
	Since $G$ is continuous the previous inequality passes to the limit  $n \to \infty$, showing that  $x \in L_\infty$, and that
	\[ \limsup_{n \to \infty} \chi_{L_n}(x) \leq \chi_{L_\infty}(x) \]
	We now want to show $\liminf_{n \to \infty} \chi_{L_n}(x) \geq \chi_{L_\infty}(x)$. If $x \notin L_\infty$ the result is straightforward. 
	Let us consider the set
	\begin{equation}
		\label{Sij}
		S_{ij} = \left( \bigcup_{k \neq i,j} \Gamma_{ijk}(\psi^\infty)  \right) \cup (\Gamma_{ij}(\psi^\infty) \cap \partial X)
	\end{equation}
	By the genericity hypothesis (Definition~\ref{Ygeneric}) we have $\mathcal{H}^{d-1}(S_{ij}) = 0$. If $x \in L_\infty \setminus S_{ij}$,  by definition we get for every $k\not\in\{i,j\}$ that $x$ does not belong to $\Gamma_{jk}(\psi^\infty)$. This implies a strict inequality
	\[ G(x, y_i, \psi_i^\infty) = G(x, y_j, \psi_j^\infty) < G(x, y_k, \psi_k^\infty). \]
	Since $F_n(x)$ converges to $x$ and since $\psi^n$ converges to $\psi^\infty$, we get for $n$ large enough 
	\[ 
	\begin{cases}
	G(F_n(x), y_i, \psi_i^n) < G(F_n(x), y_k, \psi_k^n) \\
	G(F_n(x), y_j, \psi_j^n) < G(F_n(x), y_k, \psi_k^n).
	\end{cases}
	\]
	Moreover since $x \in \Gamma_{ij}(\psi^\infty)$,  $F_n(x) \in \Gamma_{ij}(\psi^n)$. Combining the inequalities above, this shows that  $F_n(x)$ belongs to $\Lag_{ij}(\psi^n) \cap \widetilde{\Omega} = F_n(L_n)$, i.e. $x \in L_n$. This gives us
	\[
	\forall x\not\in S_{ij},\quad \liminf_{n \to \infty} \chi_{L_n}(x) \geq \chi_{L_\infty}(x).
	\]
        Consider $x\not\in S_{ij}$. For such $x$, we already know that $\chi_{L_n}(x) \to \chi_{L_\infty}(x)$ as $n\to+\infty$. Thus, if $x$ does not belong to $L_\infty$, we directly have
        $$ \lim_{n\to +\infty} \chi_{L_n}(x) \chi_X(F_n(x)) = \chi_{L_\infty}\chi_X(x) = 0.$$
        We may now assume that $x$ belongs to $L_\infty \setminus  S_{ij}$. By definition of $S_{ij}$, this implies that  $x \notin \partial X$. We  can directly deduce that $\chi_X$ is continuous at $x$ and that $\chi_X(F_n(x)) \to \chi_X(x)$ when $n \to +\infty$.
        
	In conclusion we have that $H_{ij}(\psi^n) \to H_{ij}(\psi^\infty)$, so that $H_{ij}$ is continuous.
\end{proof}
\subsection{Kernel and image of  $\D H$}\label{section:differential}
The goal of this section is to prove Proposition~\ref{prop:differentialproperties} that gives properties on the differential of the mass function $H$. 
We consider the admissible set
\begin{equation}
\label{B+}
\mathcal{S}^+ = \left\{ \psi \in \R^N \mid \forall i\in\OneN, H_i(\psi) > 0 \right\}.
\end{equation}
\begin{proposition}\label{prop:differentialproperties}
  \label{ImDH}
	In addition to the assumptions of  Proposition \ref{Hi_diff}, we assume that
        $$\inter(X) \cap \{ \rho > 0 \} \hbox{ is  path-connected }, $$
        where $\inter(X)$ is the interior of $X$. Then we have for any $\psi \in \mathcal{S}^+$
	\begin{itemize}
		\item The differential $DH(\psi)$ has rank $N-1$;
		\item The image of $DH$ is $\im( DH(\psi) ) = \1^{\perp}$ where $\1 = (1, \cdots, 1) \in \R^N$;
		\item For any $w \in \ker(DH(\psi)) \setminus \{ 0 \}$, we have for all $i \in \OneN, w_i \neq 0$ and all $w_i$ have the same sign.
	\end{itemize}	  
\end{proposition}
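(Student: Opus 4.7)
The plan is to exploit three structural features of $DH(\psi)$ that come out of Proposition~\ref{Hi_diff}: $(DH(\psi))_{ji} = H_{ij}(\psi) \geq 0$ for $i \neq j$, $(DH(\psi))_{ii} = -\sum_{j \neq i} H_{ij}(\psi)$, and the resulting identity $\1^\top DH(\psi) = 0$ (each column sums to zero). This already yields $\im(DH(\psi)) \subseteq \1^\perp$ and $\1 \in \ker(DH(\psi)^\top)$, so the remaining work is to upgrade these two inclusions to equalities and to establish the sign property of $\ker(DH(\psi))$.

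The central ingredient is a connectivity lemma. I would introduce the undirected graph $\mathcal{G}(\psi)$ on $\OneN$ with edges $\{i,j\}$ such that $H_{ij}(\psi) > 0$; this is well-defined as an undirected graph because $H_{ij}(\psi)$ and $H_{ji}(\psi)$ vanish simultaneously, both equivalent to $\rho=0$ $\Haus^{d-1}$-almost everywhere on $\Lag_{ij}(\psi)$. The lemma asserts that $\mathcal{G}(\psi)$ is connected when $\psi\in\mathcal{S}^+$. I would argue by contradiction: if $\OneN = I\sqcup I^c$ is a nontrivial partition with $H_{ij}(\psi)=0$ for every $i\in I$, $j\in I^c$, set $V=\inter(X)\cap\{\rho>0\}$ and consider the closed sets $W_I=\bigcup_{i\in I}\Lag_i(\psi)$ and $W_{I^c}=\bigcup_{j\in I^c}\Lag_j(\psi)$. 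Since $W_I\cup W_{I^c}=\Omega$, the open subsets $U_I := V\setminus W_{I^c}$ and $U_{I^c} := V\setminus W_I$ of $V$ are disjoint (their intersection is $V\setminus\Omega = \emptyset$), nonempty (each cell contributes positive $\mu$-mass in $V$ because $\psi\in\mathcal{S}^+$), while their complement $V\cap W_I\cap W_{I^c}\subseteq \bigcup_{i\in I,j\in I^c}\Lag_{ij}(\psi)\cap\{\rho>0\}$ has $\Haus^{d-1}$-measure zero by the hypothesis on $H_{ij}$. Invoking the classical fact that a $\Haus^{d-1}$-null closed subset of an open connected set in $\R^d$ (for $d\geq 2$) does not disconnect it, we conclude that $U_I\sqcup U_{I^c}$ is connected; being a disjoint union of two nonempty open sets it cannot be, a contradiction (the one-dimensional case is handled directly since Laguerre cells are intervals and adjacency follows from the linear ordering).

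Given connectivity, $\ker(DH(\psi)^\top)$ is identified by a discrete maximum principle. Any $v\in\ker(DH(\psi)^\top)$ satisfies
\[
\sum_{j\neq i} H_{ij}(\psi)(v_j - v_i) = 0\qquad\text{for every } i\in\OneN.
\]
Choosing $i^*$ with $v_{i^*} = \max_i v_i$ makes every term of the $i^*$-equation nonpositive, forcing $v_j=v_{i^*}$ for every neighbor $j$ of $i^*$ in $\mathcal{G}(\psi)$; connectivity propagates this equality to $v=v_{i^*}\1$. Thus $\ker(DH(\psi)^\top)=\Span(\1)$, so $\rk(DH(\psi))=N-1$ and $\im(DH(\psi))=\1^\perp$.

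For the sign property of $\ker(DH(\psi))$ I would apply Perron--Frobenius. Pick $\varepsilon>0$ small enough (it suffices that $\varepsilon\leq 1/\max_i \sum_{k\neq i} H_{ik}(\psi)$) so that the matrix $P := I+\varepsilon DH(\psi)$ is entrywise nonnegative. The identity $\1^\top DH(\psi)=0$ then gives $\1^\top P=\1^\top$, i.e. $P$ is column-stochastic with spectral radius $1$, and the off-diagonal nonzero pattern of $P$ coincides with $\mathcal{G}(\psi)$ so $P$ is irreducible. Perron--Frobenius supplies a strictly positive right eigenvector $w>0$ with $Pw=w$, which translates into $DH(\psi)w=0$. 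Since $\ker(DH(\psi))$ is one-dimensional by the preceding paragraph, every nonzero element of it is a nonzero scalar multiple of $w$ and has coordinates of a common sign. The main obstacle throughout the proof is the connectivity lemma, whose decisive input is the topological fact that $\Haus^{d-1}$-null closed sets do not disconnect higher-dimensional connected open subsets of Euclidean space; everything else is linear algebra combined with the explicit formula for $DH(\psi)$.
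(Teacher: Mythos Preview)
Your proposal is correct and follows essentially the same route as the paper. Both arguments hinge on the connectedness of the adjacency graph (proved via the fact that removing a closed $\Haus^{d-1}$-null set does not disconnect a path-connected open set), followed by a discrete maximum principle to identify $\ker(DH(\psi)^\top)=\Span(\1)$; for the sign property of $\ker(DH(\psi))$ you invoke Perron--Frobenius directly, which is precisely the alternative the paper itself points out in a remark after giving a hands-on two-step argument. The only cosmetic difference is that your connectivity proof packages the contradiction as a disconnection of $V$ by two open pieces, whereas the paper follows a path to locate a single interface point where the partial derivative is positive---both rely on the same topological lemma.
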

The next two lemmas have already been included in the recent survey on optimal transport involving the second and third authors~\cite{merigot:hal-02494446}, but we include them here for completeness. The proof of Proposition \ref{ImDH} is different from the previous work in optimal transport because $H$ is not symmetric.
\begin{lemma}
	\label{connectedness}
	Let $U \subset \R^d$ be a path-connected open set, and $S \subset \R^d$ be a closed
	set such that $\mathcal{H}^{d-1}(S) = 0$. 
	Then, $U \setminus S$ is path-connected.
\end{lemma}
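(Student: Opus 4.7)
The plan is to reduce the statement to a purely local claim: any two points $a,b \in U\setminus S$ such that the closed segment $[a,b]$ lies in $U$ can be joined by a continuous path in $U\setminus S$. Once this is proved, I would handle arbitrary $x,y \in U\setminus S$ as follows. Since $U$ is open and path-connected, it is polygonally connected, so I can find a finite sequence $p_0 = x, p_1, \ldots, p_n = y$ such that each segment $[p_{i-1},p_i]$ is contained in $U$; by wiggling the intermediate points slightly one may further assume $p_i \notin S$, using that $\Haus^{d-1}(S)=0$ implies $S$ has empty interior. Concatenating the bent paths produced by the local claim between consecutive $p_i$'s then yields a continuous path in $U\setminus S$ from $x$ to $y$.

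The local claim itself would be handled by a bending argument. Given $a,b \in U\setminus S$ with $[a,b]\subset U$, I would pick a small open ball $B$ around the midpoint of $[a,b]$ such that for every $z \in B$ the broken segment $[a,z]\cup[z,b]$ is contained in $U$ (possible by compactness of $[a,b]$ and openness of $U$). The goal is to find at least one $z \in B$ for which $([a,z]\cup[z,b]) \cap S = \emptyset$. For the endpoint $a$, I would consider the radial projection $\pi_a(x) = (x-a)/\|x-a\|$ onto $S^{d-1}$. This map is Lipschitz on any annulus $B(a,n)\setminus B(a,1/n)$, and Lipschitz maps preserve $\Haus^{d-1}$-negligibility, so writing $S\setminus\{a\}$ as a countable union of pieces in such annuli shows that the set $D_a \subset S^{d-1}$ of directions from $a$ that meet $S$ satisfies $\Haus^{d-1}(D_a)=0$. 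Integration in polar coordinates centred at $a$ then shows that the cone $\{a + r\omega : r\geq 0,\ \omega \in D_a\} \cap B$, which contains exactly those $z\in B$ for which $[a,z]$ meets $S\setminus\{a\}$, has $d$-dimensional Lebesgue measure zero. Applying the same argument with $b$ in place of $a$, almost every $z\in B$ satisfies $([a,z]\cup[z,b])\cap S = \emptyset$, and any such $z$ provides the desired path.

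The main obstacle I anticipate is the quantitative projection step: turning $\Haus^{d-1}(S)=0$ into the statement that the cone of bad $z$'s has $d$-dimensional Lebesgue measure zero in $B$. Closedness of $S$ is convenient here (it ensures the sets that appear are measurable), but the core ingredient is the Lipschitz invariance of $\Haus^{d-1}$-null sets combined with a polar-coordinate change of variables. The reduction to polygonal paths and the bending construction itself are routine, so essentially all the work sits in this one measure-theoretic estimate.
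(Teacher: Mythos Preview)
Your argument is correct, but it takes a somewhat different route from the paper's. Both proofs rest on the same core fact---a Lipschitz map sends $\Haus^{d-1}$-null sets to $\Haus^{d-1}$-null sets---but they deploy it differently. The paper first reduces to the case where $U$ is an open ball (by a standard connectedness argument), then for $x,y\in U\setminus S$ it takes small balls $\B(x,r),\B(y,r)\subset U\setminus S$, projects $S$ orthogonally onto the hyperplane $H$ perpendicular to $y-x$, and picks any $z\in \Pi_H(\B(x,r))\setminus \Pi_H(S)$: the line $z+\Rsp(y-x)$ then avoids $S$ entirely and threads through both small balls, and convexity of $U$ finishes the job. This avoids your polygonal-path reduction, the midpoint ball, the pair of radial projections, and the polar-coordinate integration; one orthogonal projection and one parallel translate of the segment suffice. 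Your approach, on the other hand, is a bit more flexible---it never uses convexity of $U$ directly and works segment-by-segment---but the paper's version is noticeably shorter.
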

\begin{proof}
	It suffices to treat the case where $U$ is an open ball, the general case will follow by standard connectedness arguments. Let $x, y \in U \setminus S$ be distinct points. 
	Since $U \setminus S$ is open, there exists $r > 0$ such that $\B(x, r)$ and $\B(y, r)$ are included in $U \setminus S$. 
	Consider the hyperplane  $H$ orthogonal to the
	segment $[x, y]$, and $\Pi_H$ the projection on $H$. 
	Then, since $\Pi_H$ is 1-Lipschitz,
	$\mathcal{H}^{d-1}(\Pi_H S) \leq \mathcal{H}^{d-1} (S) = 0$, so that $H \setminus \Pi_H S$ is dense in the hyperplane
	$H$. 
	In particular, there exists a point	$z \in \Pi_H (\B(x,r)) \setminus S = \Pi_H (\B(y,r)) \setminus S$.
	By construction the line $z + \R(y - x)$ avoids $S$ and passes through the balls
	$\B(x, r) \subset U \setminus S$ and $\B(y, r) \subset U \setminus S$. 
	This shows that the points $x, y$ can be connected in $U \setminus S$.
\end{proof}

We define for $\psi \in  \R^N$ the graph $\mathcal{G}_\psi = (V,E)$ with vertex set $V = \{1, \hdots, N\}$  with edges 
\[ E = \left\{(i,j) \in V^2 \mid  \frac{\partial H_i}{\partial \psi_j}(\psi) > 0 \right\} \]
We have the following result.
\begin{lemma}
\label{graph-connected}
	Under the assumptions of Proposition \ref{ImDH} and for $\psi \in \mathcal{S}^+$, the graph $\mathcal{G}_\psi$ is connected.
\end{lemma}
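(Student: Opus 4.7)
The plan is to argue by contradiction. Suppose that $\mathcal{G}_\psi$ is disconnected, so that the vertex set splits as $\OneN = I \sqcup J$ with $I, J$ non-empty and no edge between them. First I would use formula~\eqref{partialH} together with \eqref{Monotonicity} and \eqref{Twist}: the integrand in the expression for $\partial H_j/\partial \psi_i$ is strictly positive on $\Lag_{ij}(\psi) \cap \{\rho > 0\}$, so the vanishing of $\partial H_j/\partial \psi_i(\psi)$ for every $(i,j) \in I \times J$ forces
$$\Haus^{d-1}(\Lag_{ij}(\psi) \cap \{\rho > 0\}) = 0.$$

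Next, I would set $U = \inter(X) \cap \{\rho > 0\}$, $A = \bigcup_{i \in I} \Lag_i(\psi)$ and $B = \bigcup_{j \in J} \Lag_j(\psi)$. The sets $A, B$ are closed in $\R^d$ and $A \cup B = \Omega \supseteq U$, so
$$U \setminus (A \cap B) = (U \setminus A) \sqcup (U \setminus B),$$
a disjoint union of two sets open in $U$. Since $A \cap B \subseteq \bigcup_{(i,j)\in I\times J} \Lag_{ij}(\psi)$ and $U \subseteq \{\rho > 0\}$, the first step yields $\Haus^{d-1}((A\cap B) \cap U) = 0$, hence also $\Haus^d((A\cap B) \cap U) = 0$.

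I would then verify that both pieces are non-empty. The hypothesis $\psi \in \mathcal{S}^+$ gives $H_i(\psi) > 0$ for every $i$, and since $\mu$ has density $\rho$ supported in $X$ with $\mu(\partial X) = 0$, this yields $\Haus^d(\Lag_i(\psi) \cap U) > 0$. Combined with $\Haus^d(\Lag_i(\psi) \cap U \cap B) = \Haus^d\bigl(\bigcup_{j \in J} \Lag_{ij}(\psi) \cap U\bigr) = 0$ (valid for any $i \in I$), this gives $\Lag_i(\psi) \cap U \setminus B \neq \emptyset$; symmetrically $U \setminus A$ is non-empty. Applying Lemma~\ref{connectedness} to $U$ and $S = A \cap B$ would force $U \setminus (A \cap B)$ to be path-connected, contradicting the decomposition above.

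The main obstacle is precisely this application of Lemma~\ref{connectedness}: as stated, it requires $\Haus^{d-1}(S) = 0$ globally on $\R^d$, whereas here $A \cap B$ may have large $(d-1)$-dimensional portions outside $U$. Inspecting its proof, however, only the local behaviour of $S$ inside $U$ matters—the $1$-Lipschitz projections onto hyperplanes are applied only to parts of $S$ contained in subballs of $U$, along which the line segments used to connect two points live—so the weaker condition $\Haus^{d-1}((A \cap B) \cap U) = 0$ is sufficient to carry the proof through.
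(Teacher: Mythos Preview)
Your argument is correct and reaches the same contradiction, but it is organized differently from the paper's proof. The paper first removes the set $S=\bigcup_{i,j} S_{ij}$ of \eqref{Sij}, which satisfies $\Haus^{d-1}(S)=0$ thanks to the genericity hypotheses \eqref{GenOmega}--\eqref{GenX}, and applies Lemma~\ref{connectedness} as stated to $Z\setminus S$. It then takes a path between the two putative graph components, locates a point $z\in\Lag_i(\psi)\cap\Lag_j(\psi)$ with $i\in I$, $j\in J$, and uses that $z\notin S$ to obtain strict inequalities $G(z,y_i,\psi_i)>G(z,y_k,\psi_k)$ for $k\neq i,j$; this gives a small ball in which $\Gamma_{ij}(\psi)\subset\Lag_{ij}(\psi)$ and hence $\partial H_i/\partial\psi_j(\psi)>0$ directly. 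Your route instead reads the contrapositive out of formula~\eqref{partialH}: vanishing of the cross-derivatives forces $\Haus^{d-1}(\Lag_{ij}(\psi)\cap U)=0$, and then the clopen decomposition $U\setminus(A\cap B)=(U\setminus A)\sqcup(U\setminus B)$ contradicts path-connectedness.

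What each approach buys: the paper's version applies Lemma~\ref{connectedness} verbatim, since its $S$ is $\Haus^{d-1}$-null globally, at the cost of the local analysis around the point $z$. Your version avoids that local step entirely but, as you note, requires the mild localization of Lemma~\ref{connectedness} to $\Haus^{d-1}(S\cap U)=0$; your justification that the ball/projection argument only ever touches $S$ inside $U$ is correct. Two small points: the Laguerre cells are closed in $\Omega$ rather than in $\R^d$, but this suffices since $U\subseteq\Omega$; and your claim $\Haus^d(\Lag_i(\psi)\cap U)>0$ uses $\mu(\partial X)=0$, which the paper also invokes without comment.
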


\begin{proof}
	Let $Z = \inter(X) \cap \{\rho > 0\}$, and $S = \bigcup_{ij}
        S_{ij}$ where $S_{ij}$ is defined in \eqref{Sij}.  From Lemma
        \ref{connectedness} the set $Z \setminus S$ is path connected,
        we also have $\mu (Z \setminus S) = 1$ since $\mu(\partial X)
        = \mu(S) = 0$.  Suppose that $\mathcal{G}_\psi$ is not
        connected.  Let $i_0 \in \OneN$, and let $I_0$ be the
        connected component of $i_0$ in the graph $\mathcal{G}_\psi$.
        We thus have $i_0 \in I_0 \neq  \OneN$.  We consider
        the two non-empty sets
	$$U_1 = \bigcup_{i \in I_0} \Lag_i(\psi) \cap (Z \setminus S) \text{ and } U_2 = \bigcup_{i \notin I_0} \Lag_i(\psi) \cap (Z \setminus S),$$
        which partition $Z\setminus S$ up to a Lebesgue-negligible set. Moreover,  since $\psi \in \mathcal{S}^+,$
	\begin{equation*}
	\begin{cases}
		U_1 \cup U_2 = Z \setminus S, \\
		0 < \mu (U_1) < 1, \\
		0 < \mu (U_2) < 1.
	\end{cases}
	\end{equation*}
	By construction $U_1$ and $U_2$ are closed sets in $Z \setminus S$. 
	Since $\mu (U_i) > 0$ we can pick $x$ and $y$ in $Z \setminus S$ such that $x \in U_1$ and $y \in U_2$. The $Z \setminus S$ being path-connected, we know that there exists a path $\gamma \in \mathcal{C}^0 ([0,1], Z \setminus S)$ satisfying $\gamma(0) = x$ and $\gamma(1) = y$.
	We let $t = \max \{s \in [0,1] | \gamma(s) \in U_1 \}$ and we are going to show that $\gamma(t) \in U_1 \cap U_2$. By construction, $\gamma(t)$ obviously belongs to $U_1$. 
	Now if $t = 1$ we have $\gamma(t) = y \in U_2$. If not, we have for all $\epsilon > 0$ that $\gamma(t + \epsilon)  \in U_2$. 
	Since $U_2$ is relatively closed in $Z\setminus$ and since $\gamma$ is continuous, we have $\gamma(t) \in U_2$. 
	Naming $z = \gamma(t)$, there  exists $i \in I_0$, $j \notin I_0$ such that $z \in \Lag_i(\psi) \cap \Lag_j(\psi)$. Moreover, since  $z \notin S$ we get that for any  $k \notin \{ i,j \}$,
        $$G(z, y_i, \psi_i) = G(z, y_j, \psi_j) > G(z, y_k, \psi_k).$$
	By continuity of $G$ we can deduce that there exists an open ball of radius $r > 0$ such that  
	\[\forall x \in \B(z,r), \forall k \notin \{i,j\}, G(x, y_i, \psi_i) > G(x, y_k, \psi_k) \]
	This implies that
	\[ \B(z,r) \cap \Gamma_{ij}(\psi) \subset \Lag_{ij}(\psi) \]
	where $\Gamma_{ij}(\psi)$ is defined in Definition \ref{Ygeneric}. By \eqref{Twist} condition and the inversion function theorem, we know that $\Gamma_{ij}(\psi)$ is a $d-1$ dimensional manifold and $z \in \Gamma_{ij}(\psi)$. 
	Moreover we have $\rho(z)  > 0$ because $z \in Z$ and $\rho$ is continuous on $Z \subset X$ by hypothesis. 
	We now have 
	\begin{align*}
		\frac{\partial H_i}{\partial \psi_j}(\psi) 
		&= \int_{\Lag_{ij}(\psi)} \rho (x)\frac{|\partial_vG(x,y_i,\psi_i )|}{\Vert\nabla_xG(x,y_j,\psi_j) - \nabla_xG(x,y_i,\psi_i )\Vert}d\mathcal{H}^{d-1}(x) \\
		&\geq \int_{\B(z,r) \cap \Gamma_{ij}(\psi)} \rho (x)\frac{|\partial_vG(x,y_i,\psi_i )|}{\Vert\nabla_xG(x,y_j,\psi_j) - \nabla_xG(x,y_i,\psi_i )\Vert}d\mathcal{H}^{d-1}(x) > 0
	\end{align*}
	which is a contradiction with the hypothesis that $i$ and $j$ are not connected in the graph
        $\mathcal{G}_\psi$.
\end{proof}


\begin{proof}[Proof of Proposition \ref{ImDH}]
	We note the matrix $M = DH(\psi)$, with coefficients $m_{i,j} = \partial H_i/\partial \psi_j(\psi)$. 
	We first show that $\ker(M^T) = \Span(\1)$. The inclusion $\1\in\ker(M^T)$ follows from
        \[ \sum_{i = 1}^N m_{i,j} = \frac{\partial }{\partial \psi_j}\left( \sum_{i = 1}^N H_i(\psi)  \right) = 0. \]
	Consider now  $v \in \ker(M^T)$, and pick an index $i_0$ where $v$ is maximum, i.e. $i_0 \in \argmax_{1 \leq i \leq N} v_i$. We have 
	$$0 = (M^T v)_{i_0} = \sum_{i=1}^N m_{i,i_0} v_i = \sum_{i\neq i_0} m_{i,i_0} v_i + m_{i_0,i_0}v_{i_0} = \sum_{i\neq i_0} m_{i,i_0} (v_i - v_{i_0}).$$
	Since $\psi \in \mathcal{S}^+$, we have by Proposition \ref{Hi_diff} that for $i \neq i_0$, $m_{i,i_0} \geq 0$. 
	By definition of $i_0$ we also have $v_i - v_{i_0} \leq 0$. 
	From all this we deduce that $v_i = v_{i_0}$ for any $i \neq i_0$ satisfying $m_{i,i_0} > 0$, i.e. any vertex $i$ adjacent to $i_0$ in the graph $\mathcal{G}_\psi$. By connectedness of $\mathcal{G}_\psi$, we  conclude that $v = v_{i_0} \1$, thus showing $\ker(M^T) = \Span(\1)$.
        
	We can deduce from this result that $M$ is of rank $N-1$ because $\rk(M) = \rk(M^T) = N-1$. 
	Moreover for any $u \in \R^N$,
	$$\langle \1, Mu \rangle = (Mu)^T \1 = u^T M^T \1 = 0.$$ Since
        the spaces $\im(M)$ and $\1^{\perp}$ have the same dimension,
          we immediately get $\im(M) = \1^{\perp}$.
	
	Let $w \in \ker(M)\setminus \{ 0 \}$, we now want to show that for all $i \in \llbracket 1,N \rrbracket, w_i \neq 0$ and that all of the $w_i$ have the same sign. 
	The proof consists in two steps:
	\begin{itemize}
		\item Step 1: we show that $w \geq 0$ (or $-w \geq 0$).
		\item Step 2: we show that for $i \in \llbracket 1,N \rrbracket$, $w_i > 0$.
	\end{itemize}
	We define $\lambda = \max_{i} |m_{i,i}|$ and  $A = \lambda I + M$. With these definitions, $v$ belongs to $\ker(M)$ if and only if $Av = \lambda v$. Moreover, for any
        $i,j \in \llbracket 1,N \rrbracket$, one has  $a_{i,j} \geq 0$ and
        $$\displaystyle \sum_{k=1}^N a_{k,j} = \lambda.$$
	
	\noindent \textbf{Step 1:} Assume that there exists $i_0 \in \llbracket 1,N \rrbracket$ such that $w_{i_0} \geq 0$ (we can do this without loss on generality, by working on $-w$ otherwise). 
	Suppose that there exists $j \neq i_0$ such that $a_{i_0,j} > 0$ and $w_j< 0$, then since $A w = \lambda w$, we have $\lambda w_{i_0} = \sum_{j=1}^N a_{i_0,j}w_j$ and thus $\lambda |w_{i_0}| < \sum_{j=1}^N a_{i,j}|w_j|$. 
	We also have for any $i \in \llbracket 1,N \rrbracket, \lambda |w_i| \leq \sum_{j=1}^N a_{i,j}|w_j|$. 
	By summing this inequality on $i$ and since the inequality is strict when $i=i_0$, we obtain
	$$ \sum_{i=1}^N \lambda |w_i| < \sum_{i=1}^N \sum_{j=1}^N a_{i,j}|w_j| =  \sum_{j=1}^N |w_j| \sum_{i=1}^N a_{i,j} = \sum_{j=1}^N \lambda |w_j|, $$
	which is a  contradiction, so we can affirm that there exists no index $j \neq i_0$ such that $w_j < 0$ and $a_{i_0,j} > 0$. Since $A = M + \lambda I$, for $j \neq i_0 , a_{i_0,j} = m_{i_0,j}$. We thus have $\forall j \in \OneN , m_{i_0,j} > 0 \implies w_j \geq 0$. 
	By connectedness of $\mathcal{G}$ we deduce $w \geq 0$.\\
	
	\noindent \textbf{Step 2:} 
	If there exists $i \in \llbracket 1,N \rrbracket$ such that $w_i = 0$, then $\sum_j a_{i,j}w_j = 0$. 
	Recall that by construction $a_{i,j} \geq 0$ and with step 1 $w_j \geq 0$, so we have $\forall j, a_{i,j} > 0 \implies w_j = 0$. 
	Again by connectedness of $\mathcal{G}$ we have $w = 0$.
	
\end{proof}

\begin{remark} Remark that a part of the proof of Proposition~\ref{ImDH} could also be seen as a consequence of the Perron Frobenius theorem, using the notions of irreducible and stochastic matrices. 
	The matrix $A = M + \lambda I$ can be written $A = \lambda S$ where $S^T$ is a stochastic matrix. 
	The matrix $S$ is thus of spectral radius $1$ and $A$ is of spectral radius $\lambda$. 
	Since $M$ is irreducible, $A$ is also irreducible.
	Perron Frobenius Theorem then implies that $\lambda$ is a simple eigenvalue with an associated eigenvector $w$ satisfying $w_i > 0$ for any $i \in \OneN$.
	Since $A v = \lambda v \iff M v = 0$, we can deduce that $\rk(M) = N-1$ and $\ker(M) = \Span(w)$.
	Moreover since $\1 \in \ker(M^T)$, we have for any $u \in \R^N$, $\langle \1 , M u \rangle = (M u)^T \1 = u^T M^T \1 = 0$ and $\im(M) = \1^\perp$.
\end{remark}

\subsection{Damped Newton algorithm}\label{sec:algo}
In this section, we present a damped Newton algorithm to solve the generated Jacobian equation \eqref{GenJac'}, namely $H(\psi)=\nu$. For this purpose we define in the following lemma an admissible set of variable that can be used in our algorithm.
\begin{lemma}[Admissible set]
	\label{Sadlemma}
	Suppose that the hypothesis of Proposition \ref{Hi_diff} are satisfied. For any $\delta > 0$, there exists $\alpha \in \R$ such that the set
	\begin{equation}
	\label{Sad}
	\Sad := \left\{\psi \in \R^N \mid \psi_1 = \alpha \text{ and } \forall i \in \llbracket 1,N \rrbracket , H_i(\psi) \geq \delta \right\}  \subset \mathcal{S}^+
	\end{equation}
	is a compact subset of $\R^N$. Furthermore for $\delta$ small enough, the set \eqref{Sad} is non-empty.
\end{lemma}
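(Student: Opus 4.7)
The plan is to verify the four claims in turn: the inclusion $\Sad \subseteq \mathcal{S}^+$, closedness of $\Sad$, its boundedness (which is where the specific choice of $\alpha$ enters), and non-emptiness for $\delta$ small enough. The first two are essentially immediate: the inclusion follows from $\delta > 0$, since $H_i(\psi) \geq \delta$ for every $i$ forces $\psi \in \mathcal{S}^+$; and closedness follows from the fact that $H$ is continuous by Proposition~\ref{Hi_diff}, so $\Sad$ is the intersection of the closed hyperplane $\{\psi_1 = \alpha\}$ with the closed set $H^{-1}([\delta,+\infty)^N)$.

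To bound $\Sad$ from below, I would fix any $\alpha \in \Rsp$ and exploit \eqref{Uniform Convergence}. Let $C_\alpha := \sup_{x \in X} G(x, y_1, \alpha) + 1$, which is finite by continuity of $G$ and compactness of $X$. By \eqref{Uniform Convergence}, there exists $v_0 \in \Rsp$ such that $G(x, y, v) \geq C_\alpha$ for every $y \in Y$, every $v \leq v_0$, and every $x \in \Omega$. If some $\psi \in \Sad$ had a coordinate $\psi_i \leq v_0$, then $G(\cdot, y_i, \psi_i) > G(\cdot, y_1, \alpha)$ pointwise on $X$, so $\Lag_1(\psi) \cap X = \emptyset$ and $H_1(\psi) = 0$, contradicting $H_1(\psi) \geq \delta$.

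The main obstacle is the matching upper bound, because \eqref{Uniform Convergence} only controls $G$ as $v \to -\infty$. I would get around this by taking $\alpha$ sufficiently negative. For each $i \in \llbracket 2, N \rrbracket$, introduce $m_i(v) := \sup_{x \in X} G(x, y_i, v)$: by \eqref{Monotonicity} and continuity of $G$ the function $m_i$ is continuous and strictly decreasing; by \eqref{Uniform Convergence} it satisfies $\lim_{v \to -\infty} m_i(v) = +\infty$; and since $m_i$ is finite-valued and decreasing, $\ell_i := \lim_{v \to +\infty} m_i(v) \in [-\infty, +\infty)$. Applying \eqref{Uniform Convergence} once more to $y_1$ gives $L_\alpha := \inf_{x \in X} G(x, y_1, \alpha) \to +\infty$ as $\alpha \to -\infty$, so I choose $\alpha$ so negative that $L_\alpha > \max_{i\geq 2}\ell_i + 1$. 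For any $\psi \in \Sad$ and any $i \geq 2$, the condition $H_i(\psi) \geq \delta > 0$ guarantees a point $x \in \Lag_i(\psi) \cap X$, whence $G(x, y_i, \psi_i) \geq G(x, y_1, \alpha) \geq L_\alpha$; thus $m_i(\psi_i) \geq L_\alpha > \ell_i$, and strict monotonicity of $m_i$ yields the explicit upper bound $\psi_i \leq m_i^{-1}(L_\alpha) < +\infty$. Together with the lower bound, $\Sad$ lies in a bounded box, and combined with closedness this gives compactness.

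For non-emptiness at small $\delta$, it suffices to exhibit a single $\psi^0$ with $\psi^0_1 = \alpha$ and $H_i(\psi^0) > 0$ for every $i$; such a $\psi^0$ then belongs to $\Sad$ for any $\delta \leq \min_i H_i(\psi^0)$. I would construct $\psi^0$ by picking $N$ points $x_1, \ldots, x_N$ in $\inter(X) \cap \{\rho > 0\}$ with pairwise disjoint neighbourhoods, and then tuning $\psi^0_2, \ldots, \psi^0_N$ so that $G(x_i, y_i, \psi^0_i) > G(x_i, y_j, \psi^0_j)$ for every $i$ and every $j \neq i$. This is where the choice of $\alpha$ pays off a second time: the inequality $L_\alpha > \ell_j$ lets us take $\psi^0_j$ large enough to push $G(x_1, y_j, \psi^0_j)$ strictly below $G(x_1, y_1, \alpha)$, while \eqref{Uniform Convergence} lets us decrease the other $\psi^0_i$ enough so that $G(x_i, y_i, \psi^0_i)$ dominates both $G(x_i, y_1, \alpha)$ and the values $G(x_i, y_j, \psi^0_j)$ for $j \neq i, 1$. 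Continuity of $G$ then propagates each strict inequality to a neighbourhood of $x_i$ of positive $\mu$-mass, so $\Lag_i(\psi^0)$ has positive measure and $H_i(\psi^0) > 0$ for every $i$, completing the argument.
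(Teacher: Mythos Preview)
Your compactness argument is correct and follows the paper's strategy: use \eqref{Uniform Convergence} and \eqref{Monotonicity} to bound the coordinates of any $\psi \in \Sad$ from both sides, once $\alpha$ is taken negative enough. The paper phrases the upper bound slightly more directly --- it first fixes an arbitrary $\gamma$, then chooses $\alpha$ so that $\inf_{x\in X} G(x,y_1,\alpha) > \max_{(x,y)\in X\times Y} G(x,y,\gamma)$, which forces $\psi_i \leq \gamma$ for every $i\geq 2$ --- but your route via the functions $m_i$ and their limits $\ell_i$ is equivalent.

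The non-emptiness argument, however, has a genuine gap. You require each $\psi^0_j$ (for $j \geq 2$) to be \emph{large} so that $G(x_1,y_j,\psi^0_j) < G(x_1,y_1,\alpha)$, and simultaneously each $\psi^0_i$ (for $i \geq 2$) to be \emph{small} --- via \eqref{Uniform Convergence} --- so that $G(x_i,y_i,\psi^0_i)$ dominates $G(x_i,y_1,\alpha)$ and all the $G(x_i,y_j,\psi^0_j)$. But these are the very same coordinates: for each index $k \geq 2$ the condition at $x_1$ imposes a \emph{lower} bound on $\psi^0_k$, while the condition at $x_k$ is justified by sending $\psi^0_k \to -\infty$. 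Nothing in your sketch shows that these two constraints are compatible, and for an arbitrary choice of the points $x_1,\ldots,x_N$ they need not be. The paper sidesteps this circularity by working with the masses $H_i$ directly rather than through pointwise inequalities: it starts from $\psi = (\alpha,\gamma,\ldots,\gamma)$, where $H_1(\psi) = 1$ and $H_i(\psi) = 0$ for $i \geq 2$, and then decreases $\psi_2,\ldots,\psi_N$ one at a time, using continuity of $H$ and the intermediate value theorem to reach $H_i(\psi) = 1/2^{i-1}$ at step $i$. A telescoping estimate then yields $H_k(\psi) \geq 1/2^{N-1}$ for every $k$ at the end, so $\psi \in \Sad$ whenever $\delta \leq 1/2^{N-1}$.
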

\begin{proof}
  Let $\gamma \in \R$ and $M = \max_{(x,y) \in X \times Y} G(x,y,\gamma)$, where $M$ is finite thanks to the continuity of $G$ and compactness of $X\times Y$. From the condition \eqref{Uniform Convergence}, there exists $\alpha\in\Rsp$ such that $\min_{x\in X} G(x,y_1,\alpha) > M$. If $\psi\in\Rsp^N$ is such that
  $\psi_1 = \alpha$ and $\psi_i > \gamma$ for some $i\geq 2$, then using \eqref{Monotonicity},
  $$\forall x\in X, G(x,y_1,\alpha) > M \geq G(x,y_i,\gamma) \geq
  G(x,y_i,\psi_i),$$ thus implying that $\Lag_{i}(\psi) = \emptyset$,
  and in particular $\psi\not\in\Sad$. We argue similarly to show an
  upper bound on the elements of $\Sad$: by \eqref{Uniform
    Convergence}, there exists $\beta \in \R$ such that $\min_{(x,y)
    \in X \times Y} G(x,y,\beta) > \max_{x \in X} G(x ,y_1, \alpha)$.
  If $\psi\in\Rsp^N$ is such that $\psi_1 = \alpha$ and $\psi_i <
  \beta$ for some $i\geq 2$, then using \eqref{Monotonicity}, we get
  $$ \forall x\in X, G(x,y_i,\psi_i) \geq G(x,y_i,\beta) >
  G(x,y_1,\alpha),$$ thus showing that $\Lag_1(\psi) = \emptyset$, so
  that $\psi\not\in \Sad$.  The set $\Sad$ can be written as $\Sad =
  \{ \alpha \} \times \cap H^{-1}( [\delta, 1]^N ),$ and is therefore
  closed by continuity of $H$. The previous computations show that
  $\Sad \subseteq \{\alpha\}\times [ \beta, \gamma ]^{N-1}$, proving
  that $\Sad$ is compact.

	Now suppose that $\delta \leq 1/2^{N-1}$, then we can iteratively construct a vector $\psi \in \Sad$ in the following way. We start from $\psi = (\alpha, \gamma, \cdots, \gamma) \in \R^N$. We then have $H_1(\psi) = 1$ and for any $i \geq 2, H_i(\psi) = 0$. Then for all $i$ from $2$ to $N$ can decrease $\psi_i$ such that $H_i(\psi) = 1 / 2^{i-1}$. Then after iteration $i$ we have
	\[ \forall k < i, H_k(\psi) \geq \frac{1}{2 ^{k-1}} - \sum_{k +1 \leq j \leq i} \frac{1}{2 ^{j-1}} = \frac{1}{2^{i-1}}   \]
	After iteration $N$ we thus have that for all $i \in \OneN, H_i(\psi) \geq 1/ 2^{N-1} \geq \delta$, and since $\psi_1 = \alpha$ has not been changed during the process we have $\psi \in \Sad$ and $\Sad \neq \emptyset$.
\end{proof}

The differential of $H$ is not invertible, but we can still define a Newton's direction by fixing one coordinate:
\begin{proposition}[Newton's direction]
	Under the assumptions of Proposition \ref{ImDH}, the system
	\begin{equation}
	\label{udes}
	\begin{cases}
	DH(\psi)u = H(\psi) - \nu\\
	u_1 = 0
	\end{cases}
	\end{equation}
	has a unique solution in $\R^N$.
\end{proposition}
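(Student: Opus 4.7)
The plan is to use Proposition~\ref{ImDH} to describe all solutions of the linear equation $DH(\psi)u = H(\psi) - \nu$, and then show that the affine constraint $u_1 = 0$ singles out exactly one of them.

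First I would verify that the right-hand side lies in the image of $DH(\psi)$. By Proposition~\ref{ImDH}, $\im(DH(\psi)) = \1^\perp$. Since $H$ is the mass function of a partition (up to $\mu$-negligible sets), one has $\sum_{i} H_i(\psi) = \mu(\Omega) = 1$, and by assumption $\sum_i \nu_i = 1$. Hence $\sca{\1}{H(\psi) - \nu} = 0$, so $H(\psi) - \nu \in \im(DH(\psi))$ and there exists at least one $u_0 \in \R^N$ with $DH(\psi) u_0 = H(\psi) - \nu$.

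Next I would parametrize the full solution set of the underdetermined system. Since $DH(\psi)$ has rank $N-1$ by Proposition~\ref{ImDH}, its kernel is one-dimensional; fix a generator $w \in \ker(DH(\psi)) \setminus \{0\}$. Every solution of $DH(\psi)u = H(\psi) - \nu$ is of the form $u = u_0 + tw$ for a unique $t \in \R$.

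Finally I would impose $u_1 = 0$. The equation $(u_0 + tw)_1 = 0$ becomes $t w_1 = -(u_0)_1$. By the last item of Proposition~\ref{ImDH}, all coordinates of $w$ are nonzero (and of the same sign), so in particular $w_1 \neq 0$, giving the unique value $t = -(u_0)_1 / w_1$. This produces exactly one $u \in \R^N$ satisfying both conditions of \eqref{udes}. No step is really an obstacle here; the only point that requires the full strength of Proposition~\ref{ImDH} is that $w_1 \neq 0$, which is what allows the affine slice $\{u_1 = 0\}$ to intersect the affine solution space $u_0 + \ker(DH(\psi))$ transversally in a single point.
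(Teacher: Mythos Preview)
Your argument is correct and follows essentially the same route as the paper: show $H(\psi)-\nu\in\1^\perp=\im(DH(\psi))$, parametrize the affine solution set as $u_0+\ker(DH(\psi))$, and use $w_1\neq 0$ from Proposition~\ref{ImDH} to pick out the unique point with $u_1=0$. The only cosmetic difference is that you spell out explicitly why $\sca{\1}{H(\psi)-\nu}=0$, which the paper leaves implicit.
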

\begin{proof}
Notice that from Proposition \ref{ImDH}, $DH(\psi)$ is of rank $N-1$ and since $H(\psi) - \nu \in \1^\perp = \im(DH(\psi))$, the set $S = \{u \in \R^N | DH(\psi)u = H(\psi) - \nu \}$ is of dimension 1. For $u \in S$ and $w \in \ker(DH(\psi))\setminus \{0\}$, $S = \{u + tw, t \in \R \}$. Since $w_1 \neq 0$ for $w \in \ker(DH(\psi)) \setminus \{0\}$, system \eqref{udes} has a unique solution.
\end{proof}
\begin{algorithm}
	\caption{Damped Newton algorithm to solve \eqref{GenJac'}}
	\label{newton}
	\begin{algorithmic}[1]
		\Require 
		$\epsilon > 0$; initialization $\psi^0 \in \Sad$ where $\delta \leq \min_i \nu_i /2$
		\Ensure $\psi$ such that $\Vert H(\psi)-\nu \Vert \leq \epsilon$
		\State $k \leftarrow 0$
		\While{$\Vert H(\psi^k) -\nu \Vert >  \epsilon$}
		\State Define  $u^k$ as the solution of the linear system
		$$
		\hspace{-5cm}
		\begin{cases}
		DH(\psi^k)u = H(\psi^k) - \nu\\
		u_1 = 0
		\end{cases}
		$$
		\State Compute $\tau^k$ by backtracking, i.e.
		\begin{align*}
		\displaystyle \tau^k = \max &\{ \tau \in 2^{-\N} | \psi^{k,{\tau}} = \psi^k - \tau u^k \in \Sad \text{ and } \\
		 &~\Vert H(\psi^{k,{\tau}}) - \nu\Vert \leq (1- \frac{\tau}{2})\Vert H(\psi^k) - \nu\Vert
		\end{align*}
		\State $\psi^{k+1} \leftarrow \psi^k - \tau^k u^k$ and $k \leftarrow k+1$
		\EndWhile
		\State \Return $\psi^k$
	\end{algorithmic}
\end{algorithm}
\begin{theorem}[Linear convergence]
  \label{th:convergence}
  Assume the following assumptions:
      \begin{itemize}
    \item the generating function $G \in \Class^2(\Omega\times Y \times \R)$ satisfies the assumptions
  \eqref{Regularity}, \eqref{Monotonicity}, \eqref{Twist}, \eqref{Uniform Convergence},
  \eqref{GenOmega}, \eqref{GenX},
  \item $X \subseteq \Omega$ is compact
    and  $\rho$ is a continuous probability density on $X$.
    \item $\inter(X) \cap \{ \rho > 0 \}$ is is  path-connected.
  \end{itemize}
      Then, there exists $\tau^*\in ]0,1]$ such that the iterates of Algorithm~\ref{newton} satisfy $$ \| H(\psi^k) - \nu \| \le \left(1 - \frac{\tau^*}{2}\right)^k \| H(\psi^0) - \nu \|.$$
          In particular, Algorithm~\ref{newton} terminates.
\end{theorem}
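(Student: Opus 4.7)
The plan is to reduce the theorem to producing a uniform lower bound $\tau^* \in\,]0,1]$ on the step sizes $\tau^k$ returned by the backtracking rule; granted such a $\tau^*$, the inequality $\Vert H(\psi^{k+1}) - \nu \Vert \le (1 - \tau^k/2)\Vert H(\psi^k) - \nu \Vert$ built into Algorithm~\ref{newton} telescopes at once to the claimed geometric rate, and termination follows.

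Before producing $\tau^*$, I would verify by induction that $\psi^k \in \Sad$ for every $k$: the initialization is the base case, and in the inductive step the condition $u_1^k = 0$ preserves $\psi_1^{k+1} = \alpha$ while the backtracking explicitly enforces the constraints $H_i(\psi^{k+1}) \ge \delta$. In particular Proposition~\ref{ImDH} applies at every iterate, so $DH(\psi^k)$ has rank $N-1$ with image $\1^\perp$, its kernel meets $\{u \in \R^N : u_1 = 0\}$ only at the origin, and since $H(\psi^k) - \nu \in \1^\perp$ by mass balance, the linear system~\eqref{udes} defining $u^k$ has a unique solution.

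The compactness of $\Sad$ (Lemma~\ref{Sadlemma}) combined with the continuity of $DH$ (Proposition~\ref{Hi_diff}) then supplies two uniform quantitative ingredients. Letting $L(\psi)$ denote the restriction of $DH(\psi)$ to $\{u_1 = 0\}$ seen as an operator into $\1^\perp$, continuity of $\psi \mapsto L(\psi)^{-1}$ on $\Sad$ yields a constant $C > 0$ such that $\Vert u^k \Vert \le C\Vert H(\psi^k) - \nu \Vert$; uniform continuity of $DH$ on a compact neighbourhood of $\Sad$ supplies a modulus $\omega$ with $\omega(r) \to 0$ as $r \to 0^+$; and $K := \sup_{\psi \in \Sad} \Vert H(\psi) - \nu \Vert$ is finite.

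The crux of the argument is then the verification of both backtracking conditions for all $\tau$ below an explicit threshold $\tau_0 > 0$ independent of $k$. Taylor-expanding $H$ along the segment $[\psi^k, \psi^k - \tau u^k]$ gives
\begin{equation*}
H(\psi^{k,\tau}) - \nu = (1-\tau)(H(\psi^k) - \nu) + R_k(\tau), \quad \Vert R_k(\tau) \Vert \le \tau\, \omega(CK\tau)\, \Vert u^k \Vert.
\end{equation*}
The sufficient-decrease condition reduces, via $\Vert u^k \Vert \le C\Vert H(\psi^k) - \nu \Vert$, to the $k$-independent inequality $\omega(CK\tau) \le 1/(2C)$. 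Componentwise, and using $\nu_i \ge 2\delta$ from the initialization requirement, the admissibility condition gives $H_i(\psi^{k,\tau}) \ge \delta + \tau\bigl(\delta - \omega(CK\tau)\, CK\bigr)$, which reduces to $\omega(CK\tau) \le \delta/(CK)$. Choosing $\tau_0$ small enough to satisfy both, and setting $\tau^* = 2^{-\lceil \log_2(1/\tau_0)\rceil}$, the backtracking returns $\tau^k \ge \tau^*$ at every iteration. The main conceptual difficulty is precisely to preserve this uniformity along the entire sequence of iterates, which is why the compactness of $\Sad$ and the uniform rank property of $DH$ on all of $\Sad$ (rather than just at a solution) are essential ingredients.
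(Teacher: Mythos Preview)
Your proof is correct and follows essentially the same strategy as the paper: compactness of the admissible set, a uniform bound on the Newton direction via invertibility of the restricted Jacobian, and a Taylor expansion to verify both backtracking conditions for all $\tau$ below a uniform threshold. The only differences are cosmetic---you work on all of $\Sad$ while the paper restricts to the sublevel set $K^\delta=\{\psi\in\Sad:\Vert H(\psi)-\nu\Vert\le\Vert H(\psi^0)-\nu\Vert\}$, you invert the restriction $L(\psi)$ of $DH(\psi)$ to $\{u_1=0\}$ whereas the paper inverts $M(\psi)=DH(\psi)+e_1e_1^T$, and you quantify the Taylor remainder via an explicit modulus of continuity $\omega$ whereas the paper uses little-$o$ notation together with a pointwise-to-uniform compactness passage.
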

\begin{proof}
	Let $\psi^0 \in \Sad$, we define the set
	$$ K^{\delta} = \{ \psi \in \Sad , \Vert H(\psi)  - \nu \Vert \leq \Vert H(\psi_0) - \nu \Vert\}$$
	Since the function $H$ is continuous, the set $K^{\delta}$ is non-empty and compact.
	Note that system \eqref{udes} has $N+1$ lines for $N$ variables, and we know that the last line $u_1 = 0$, which can be written $e_1^T u = 0$, is linearly independent from the others. We can thus rewrite the system in the following form
	\begin{equation}
		\label{udes2}
		M(\psi) u = H(\psi) - \nu
	\end{equation}
	where $M(\psi) = DH(\psi) + e_1 e_1^T$. Obviously if $u$ is a solution of \eqref{udes} then it is also a solution of \eqref{udes2}. Now if $u$ is a solution of \eqref{udes2}, since $e_1 \notin \im(DH(\psi))$ and $H(\psi) - \nu \in \im(DH(\psi))$, we have $e_1 e_1^T u =  e_1^T u e_1= 0$ which means that the scalar $e_1^T u = 0$ and thus, u satisfies \eqref{udes}. Since \eqref{udes2} has a unique solution, $M(\psi)$ is thus invertible.
	Let $u_\psi$ solution of \eqref{udes2} for a given $\psi$. We have $u_\psi = M^{-1}(\psi) (H(\psi) - \nu )$. We thus have for any $\psi \in K^\delta$ that $\Vert u_\psi \Vert \leq \Vert M^{-1}(\psi)\Vert_{op}  \Vert (H(\psi) - \nu )\Vert$ where $ \Vert \cdot\Vert_{op}$ denotes the operator norm in $\mathcal{M}_N(\R)$. The function $\psi \mapsto M(\psi)$ is continuous and $M$ is invertible so $\psi \mapsto \Vert M^{-1}(\psi)\Vert_{op}$ is also continuous and admits a maximum on the compact set $K^\delta$. We note $C = \max_{\psi \in K^\delta} \Vert M^{-1}(\psi)\Vert_{op}$ so we have for any $\psi \in K^\delta$, $\Vert u_\psi\Vert \leq C \Vert H(\psi) - \nu\Vert$.\\
	Let $\psi \in K^\delta$ and $\psi^\tau = \psi - \tau u_\psi$ for $\tau \in [0,1]$. The first coordinate of $\psi^\tau$ satisfies $\psi_1^\tau = \alpha$. For a small $\tau$ we can write the Taylor expansion
	\begin{align*}
	H(\psi^\tau) &= H(\psi)  - \tau DH(\psi)u_\psi + o(\tau \Vert u_\psi\Vert)\\
	&= H(\psi) - \tau (H(\psi) - \nu) + o(\tau \Vert H(\psi) - \nu\Vert)
	\end{align*}
	it follows that
	$$\Vert H(\psi^\tau) - \nu \Vert = (1 - \tau)\Vert H(\psi) - \nu \Vert + o(\tau \Vert H(\psi) - \nu\Vert)$$
	and thus there exists $\tau^1_\psi > 0$ such that for all $\tau \in ]0, \tau^1_\psi[$
	$$\Vert H(\psi^\tau) - \nu \Vert \leq (1 - \frac{\tau}{2})\Vert H(\psi) - \nu \Vert$$
	By compactness of $K^\delta$, this property holds on an uniform open range $]0,\tau^1[$.\\
	Moreover, coordinatewise we have for $i \in \llbracket 1,N \rrbracket$, $$H_i(\psi^\tau) = (1 - \tau) H_i(\psi) + \tau \nu_i+ o(\tau \Vert H(\psi) - \nu\Vert)$$ and since $\nu_i\geq 2 \delta$ there exists $\tau^2_\psi > 0$ such that $$\forall \tau \in ]0,\tau^2_\psi[,~\forall i \in \OneN,~H_i(\psi^\tau) \geq (1 + \frac{\tau}{2}) \delta.$$ 	
	Then again by compactness of $K^\delta$, there exists $\tau^2 >0$ such that for all $\psi \in K^{\delta}$ and $\tau \in ]0, \tau^2[$, $\psi^\tau \in \Sad$. 
            This implies that the chosen $\tau^k$ in the algorithm will always be larger than
            $$ \tau^* = \frac{1}{2}\min(\tau^1,\tau^2).$$
            By definition of the iterates, we deduce at one that
            $$  \Vert H(\psi^{k+1}) - \nu \Vert \leq (1 - \frac{\tau^*}{2})\Vert H(\psi^k) - \nu \Vert, $$
            thus proving the desired convergence result.
\end{proof}
\begin{remark}[Existence] Note that the convergence of the algorithm allows to recover the existence of a
solution to the semi-discrete generated Jacobian equation. To obtain
this result we need the set $\Sad$ to be non-empty, which is the case by Lemma~\ref{Sadlemma} if $\delta$ is small enough. 
\end{remark}

\section{Application to the Near Field Parallel Reflector problem}\label{sec:application}
The Near Field Parallel Reflector problem is a non-imaging optics
problem that cannot be recast as an optimal transport
problem~\cite{kochengin1997determination,oliker2003mathematical}, but
that can be written as a generated Jacobian
equation~\cite{jiang2014pogorelov,abedin2017iterative}. We show in
this section that we can apply the Damped Newton algorithm to solve
this problem.

The description of the problem is as follows. We have a collimated light source (i.e. all the rays are parallel and vertical) emitted from an horizontal plane  $X \subset \R^2\times\{0\}$, whose intensity is modeled by a probability measure $\mu$ on $X$. We also have a target light, which is modeled by a probability measure $\nu$ supported on a finite set $Y \subset \R^2\times\{0\}$. 
The Near Field parallel reflector problem consists in finding the surface $\Sigma $ of a  mirror that reflects the measure $\mu$ to the measure $\nu$. 
Let us denote by $T_\Sigma : X \to Y$ the map that associates to any incident ray emanating from $x \in X$ the reflected direction $T_\Sigma(x)$ using Snell's law of reflection. 
The Near Field refractor problem then amounts to finding the mirror surface $\Sigma$ such that for any point $y \in Y$, 
\begin{equation}\label{eq:nearfield}\tag{NF paral}
\mu(T_\Sigma^{-1}(y)) = \nu(y).
\end{equation}

\begin{figure}[!h]
\begin{center}
\includegraphics[scale = 0.6]{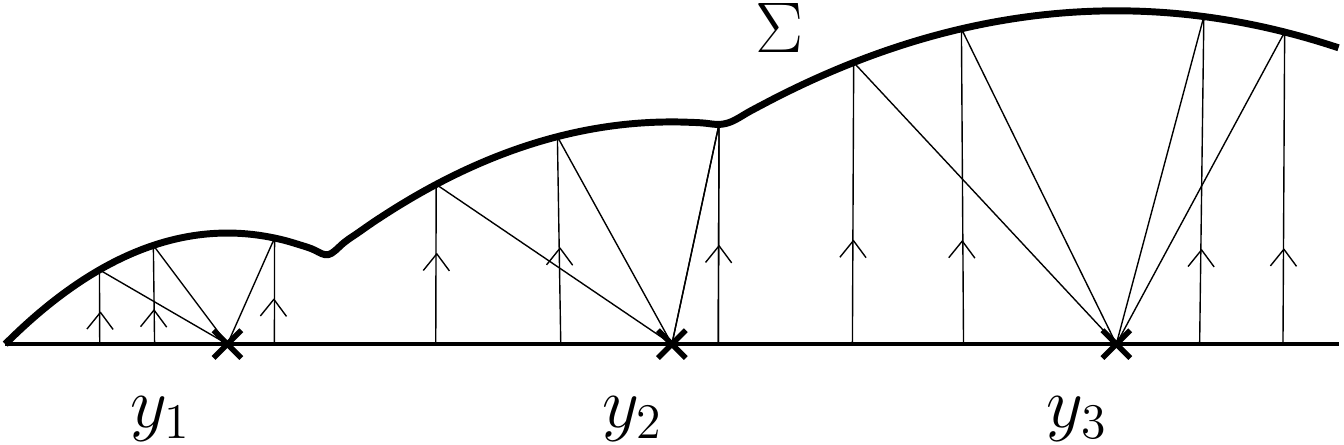}
\caption{A reflector composed of three paraboloids reflecting upward vertical rays toward the points $y_1, y_2, y_3$.}
\label{mirror}
\end{center}
\end{figure}

\subsection{Generated Jacobian equation.} 
In order to handle this problem, it is natural to consider paraboloid surfaces. Indeed, as illustrated in Figure~\ref{mirror}, a paraboloid $P(y,1 / \psi(y))$ with focal point $y\in \R^3$, focal distance $1 / \psi(y)$ and direction $(0,0,-1)$ reflects every upward vertical rays toward the focal point $y$. Since the target is finite, we choose to define the reflector surface $\Sigma$ as the upper envelop of a finite family of paraboloids $P(y,1 / \psi(y))$. Every paraboloid $P(y,1 / \psi(y))$ being the graph of the function $x\mapsto  1 / (2 \psi(y)) - \psi(y) \Vert x - y \Vert^2/2$ over $X$, the surface $\Sigma$ is the graph of the function 
 \[
 u(x) = \max_{y \in Y} \frac{1}{2 \psi(y)} - \frac{\psi(y)}{2} \Vert x - y \Vert^2 .
 \] 
We define  $G : \Omega \times Y \times \R_+^* \to \R$ by
\begin{equation}\label{eq:generatingfunctionoptic}
G(x,y,v) = \frac{1}{2 v} - \frac{v}{2} \Vert x - y \Vert^2
\end{equation}
where $\Omega$ is a bounded  open set containing $X$. Then for every $y\in Y$, one has $T_\Sigma^{-1}(y)=\Lag_y(\psi)$. 
In order to show that the semi-discrete version of Near Field
problem~\eqref{eq:nearfield} can be solved using our algorithm, we
need to show that the generating function $G$ satisfies all the
hypothesis of Definition~\ref{defG}.

The conditions $\eqref{Regularity}$, $\eqref{Monotonicity}$ and $\eqref{Uniform Convergence}$ are easy to verify, as mentioned in~\cite{abedin2017iterative}. This follows from the fact that $(x,y,v) \mapsto G(x,y,v)$ is continuously differentiable in $x$ and $v$, that $\nabla_x G(x,y,v) = v(y - x)$ and  that $\partial_v G(x,y,v) = -1/(2v^2) - v\Vert x - y\Vert^2/2$. The  $\eqref{Uniform Convergence}$ condition is satisfied because $\Omega$ is bounded. 
Concerning the Twist assumption, F. Abedin and C. Gutierrez~\cite{abedin2017iterative} introduce a necessary condition that they call \emph{Visibility condition}. This condition is that for any two point $y_i,y_j\in Y$ the line containing these two points does not intersect $X$. Since $X$ and $Y$ lie in the same plane $\R^2\times \{0\}$, this condition is quite restrictive in practice. We show below that it is not necessary here, since it is sufficient to have the \eqref{Twist} Condition on some interval $]0, \gamma[$ with $\gamma \in \R_+$.
\begin{proposition} 
	The function $G$ satisfies the \eqref{Twist} condition on $X\times Y\times ]0,\gamma[$ where $\gamma$ satisfies 
	\[ 
	\gamma <  \inf_{(x,y) \in X \times Y} \frac{1}{\| x - y \|}
	\]
\end{proposition}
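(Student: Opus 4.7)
The plan is to fix $x \in X$ and suppose that $(y_1,v_1), (y_2,v_2) \in Y \times \, ]0,\gamma[$ satisfy
\[ G(x,y_1,v_1) = G(x,y_2,v_2) \quad \text{and} \quad \nabla_x G(x,y_1,v_1) = \nabla_x G(x,y_2,v_2), \]
and then deduce that $(y_1,v_1)=(y_2,v_2)$. Writing $r_i = \|x-y_i\|$, the first step is to compute $\nabla_x G(x,y,v) = v(y-x)$, so the gradient equality becomes $v_1(y_1-x)=v_2(y_2-x)$. Taking norms (and using $v_i>0$) yields the scalar identity $v_1 r_1 = v_2 r_2$; call this common value $t$. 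This is the cheap consequence of Twist on the gradient side.

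Next I would plug this into the equality of $G$-values $\tfrac{1}{2v_1}-\tfrac{v_1}{2}r_1^2 = \tfrac{1}{2v_2}-\tfrac{v_2}{2}r_2^2$, substituting $v_i r_i^2 = t r_i$ to rewrite it as
\[ \frac{1}{v_1}-\frac{1}{v_2} = t(r_1-r_2). \]
The relation $v_1 r_1 = v_2 r_2$ gives $\tfrac{1}{v_1}-\tfrac{1}{v_2} = \tfrac{v_2-v_1}{v_1 v_2} = \tfrac{r_1-r_2}{v_1 r_1}$, so that the displayed identity becomes
\[ (r_1 - r_2)\left(\frac{1}{v_1 r_1} - v_1 r_1\right) = 0. \]

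The key case is $r_1 \neq r_2$: it would force $v_1 r_1 = 1$, i.e.\ $v_1 = 1/\|x-y_1\|$, which contradicts the hypothesis $v_1 < \gamma < \inf_{X\times Y} 1/\|x-y\| \leq 1/r_1$. Hence $r_1 = r_2$, which combined with $v_1 r_1 = v_2 r_2$ gives $v_1 = v_2$; plugging back into the gradient identity $v_1(y_1-x)=v_2(y_2-x)$ yields $y_1=y_2$. The main (mild) obstacle is only the algebraic manipulation to isolate the factor $(r_1-r_2)(1/(v_1 r_1)-v_1 r_1)$; once that factorization is obtained, the upper bound $\gamma$ does exactly what is needed to eliminate one factor, and the remaining equalities are immediate.
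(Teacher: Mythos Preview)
Your proof is correct and follows essentially the same route as the paper: both extract $v_1(y_1-x)=v_2(y_2-x)$ from the gradient equality, substitute into the $G$-value equality, and factor so that one factor vanishing forces $v_i\|x-y_i\|=1$ (excluded by $v_i<\gamma$) while the other yields $(y_1,v_1)=(y_2,v_2)$. The only difference is bookkeeping: you introduce the symmetric quantity $t=v_1r_1=v_2r_2$ and factor as $(r_1-r_2)\bigl(\tfrac{1}{t}-t\bigr)=0$, whereas the paper keeps $\|x-y_2\|$ as reference and factors out $(1-v_2^2\|x-y_2\|^2)$.
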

\begin{proof}
	Let $x \in X$,  and suppose that $G(x,y_1,v_1)=G(x,y_2,v_2)$ and that $\nabla_x G(x,y_1,v_1)=\nabla_x G(x,y_2,v_2)$, with $v_i\in ]0,\gamma]$ and $y_i\in Y$. 
	The second condition implies that $v_1(y_1-x)=v_2(y_2-x)$, which implies that $x$, $y_1$ and $y_2$ are collinear. 
	We then have $y_1-x=(v_2/v_1) (y_2-x)$. 
	Plugging this in the relation $G(x,y_1,v_1)=G(x,y_2,v_2)$ gives
	\[
	\frac{1}{2v_1} - \frac{v_1}{2}\frac{v_2^2}{v_1^2}\|x- y_2\|^2 = \frac{1}{2v_2} - \frac{v_2}{2}\|x- y_2\|^2
	\]
	which gives
	\[
	\frac{1}{2v_1}( 1- v_2^2\|x-y_2\|^2) = \frac{1}{2v_2}( 1- v_2^2\|x-y_2\|^2),
	\]
	thus we have either $(y_1, v_1) = (y_2, v_2)$ or $v_2=1/\|x-y_2\|$. 
	The latter implying that $v_2 > \gamma$, which is not possible since by assumption $v_2 \leq \gamma$.
	It follows that $y,v \mapsto (G(x,y,v), \nabla_x G(x,y,v))$ is injective on $Y \times ]0, \gamma]$ for any $x \in X$.
\end{proof}

\subsection{Laguerre and M\"obius diagram.} 
In order to solve the Generated Jacobian equation~\eqref{eq:nearfield} with the Damped Newton algorithm, we study the Laguerre diagram induced by the generating function $G$. We observe that it is a particular instance of a M\"obius diagram~\cite{boissonat2007}. This will be useful to get a geometric condition that implies genericity (necessary to apply Algorithm~\ref{newton}) and it will also be used for the numerical computation of the Laguerre diagram. 

\begin{definition}[M\"obius diagram]
The M\"obius diagram of a family $\omega = (\omega_i)_{1 \leq i \leq N}$ of $N$ triplets $\omega_i = (\lambda_i, \mu_i, p_i) \in \R \times \R \times \R^d$ is the decomposition of the space into M\"obius cells $M_i(\omega)$ defined by
\[ M_i(\omega) = \left\{ x \in \R^d | \forall j \in \OneN,  \lambda_i \|x - p_i \|^2 - \mu_i \leq \lambda_j \|x - p_j \|^2 - \mu_j  \right\}\]
\end{definition}
A simple calculation shows the boundary of M\"obius cells is composed of arc of (possibly degenerated) circles~\cite{boissonat2007}.
\begin{proposition}
For any $p_i \neq p_j$, the intersection $M_i(\omega) \cap M_j(\omega)$ between two M\"obius cells is either empty, or an arc of circle whose center belong to the line passing through $p_i$ and $p_j$, or the bisector of $p_i$ and $p_j$.
\end{proposition}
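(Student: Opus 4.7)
The plan is to characterize explicitly the locus
$$\Gamma_{ij} = \left\{x \in \R^d : \lambda_i \|x - p_i\|^2 - \mu_i = \lambda_j \|x - p_j\|^2 - \mu_j\right\},$$
since the intersection $M_i(\omega) \cap M_j(\omega)$ is the set of points of $\Gamma_{ij}$ that simultaneously satisfy $\lambda_i \|x - p_i\|^2 - \mu_i \le \lambda_k \|x - p_k\|^2 - \mu_k$ for every other index $k$. Each such extra constraint defines a closed region of $\R^d$, so it suffices to show that $\Gamma_{ij}$ itself is either empty, a single point, a full (hyper)sphere whose center lies on the line $(p_i p_j)$, or a hyperplane perpendicular to that line; cutting with finitely many closed regions then yields at most an arc, a single point, or a piece of that hyperplane.

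The first step is to expand $\|x - p_k\|^2 = \|x\|^2 - 2\langle x, p_k \rangle + \|p_k\|^2$ and rewrite the defining equation of $\Gamma_{ij}$ in the form
$$(\lambda_i - \lambda_j)\|x\|^2 - 2\langle x, \lambda_i p_i - \lambda_j p_j\rangle + \bigl(\lambda_i\|p_i\|^2 - \lambda_j\|p_j\|^2 - \mu_i + \mu_j\bigr) = 0.$$
I will then split into two cases according to whether $\lambda_i = \lambda_j$. If $\lambda_i = \lambda_j$, the quadratic term vanishes and the equation becomes affine with linear part $-2\lambda_i \langle x, p_i - p_j\rangle$; since $p_i \neq p_j$, this cuts out a hyperplane with normal vector parallel to $p_i - p_j$, i.e.\ perpendicular to the line through $p_i$ and $p_j$ (the bisector in the statement).

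If $\lambda_i \neq \lambda_j$, I divide through by $\lambda_i - \lambda_j$ and complete the square to put the equation in the canonical form $\|x - c\|^2 = r^2$, with
$$c = \frac{\lambda_i p_i - \lambda_j p_j}{\lambda_i - \lambda_j}.$$
The coefficients $\lambda_i/(\lambda_i - \lambda_j)$ and $-\lambda_j/(\lambda_i - \lambda_j)$ sum to $1$, so $c$ is an affine combination of $p_i$ and $p_j$ and therefore lies on the line through them. Depending on the sign of $r^2$ we get a sphere (an arc in dimension $2$), a single point, or the empty set, as claimed.

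There is no real obstacle: the argument is a short explicit computation in $\R^d$, and the only mild care needed is to observe that the cells $M_i(\omega)$ and $M_j(\omega)$ are closed and defined by finitely many inequalities, so their intersection is a closed subset of $\Gamma_{ij}$ and inherits its arc-of-circle (resp.\ perpendicular-bisector) structure.
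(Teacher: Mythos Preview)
Your argument is correct and supplies precisely the explicit computation that the paper omits: the paper does not prove this proposition but simply states that ``a simple calculation shows'' the result and cites~\cite{boissonat2007}. Your expansion and case split on $\lambda_i=\lambda_j$ versus $\lambda_i\neq\lambda_j$ is exactly that calculation.

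One small caveat worth recording: in the case $\lambda_i=\lambda_j\neq 0$, the affine equation you obtain is
\[
-2\lambda_i\langle x,\,p_i-p_j\rangle + \lambda_i(\|p_i\|^2-\|p_j\|^2) - (\mu_i-\mu_j)=0,
\]
which is a hyperplane orthogonal to $p_i-p_j$ but equals the perpendicular \emph{bisector} of $p_i,p_j$ only when $\mu_i=\mu_j$. So the phrase ``the bisector of $p_i$ and $p_j$'' in the statement should be read loosely as ``a hyperplane perpendicular to the line $(p_ip_j)$''. This is harmless in the paper's application, since there $\lambda_i=\psi_i/2$ and $\mu_i=1/(2\psi_i)$, so $\lambda_i=\lambda_j$ forces $\mu_i=\mu_j$ and the hyperplane really is the bisector. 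You might also note that the degenerate case $\lambda_i=\lambda_j=0$ (where $\Gamma_{ij}$ is either empty or all of $\R^d$) is excluded in the application for the same reason.
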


Note that if we define $\lambda_i = \psi_i / 2$, $\mu_i = 1 / 2 \psi_i$ and $p_i = y_i$, then the Laguerre cells are M\"obius cells, namely
$$\Lag_i(\psi) = M_i(\omega) \cap \Omega.$$
This allows to show that the conditions \eqref{GenOmega} and \eqref{GenX} that are required to show the convergence of Algorithm~\ref{newton} are not restrictive. Indeed, by the previous proposition, the interface $\Gamma_{ij}(\psi)$ between the two Laguerre cells associated to $y_i$ and $y_j$ is contained in a circle for which the center is on the line passing through $y_i$ and $y_j$.
This circle can degenerate into a line, in this case it is the bisector between $y_i$ and $y_j$.
Suppose that $Y$ does not contain three colinear points, then for any distinct $i,j,k$, $\Gamma_{ijk}(\psi)$ is the intersection of two circles with different centers and \eqref{GenOmega} is satisfied.
Similarly if $\partial X$ doesn't contain any circle arc, nor bisectors of any two points of $Y$, then \eqref{GenX} is also satisfied. This allows to prove the following theorem.

\begin{theorem} 
Suppose that $Y$ does not contain three aligned points, and that $\partial X$ doesn't contain any circle arc, nor bisectors of any two points of $Y$. Assuming that the measures $\mu$ and $\nu$ satisfy the mass balance $\mu(X) = \nu(Y)$ and that $\mu$ is absolutely continuous with a continuous density $\rho$ such that $\inter(X) \cap \{ \rho > 0 \}$ is path-connected. Then the Damped newton Algorithm (Algorithm~\ref{newton}) converges toward a solution of \eqref{eq:nearfield}.
\end{theorem}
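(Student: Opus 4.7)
The plan is to verify that the generating function $G(x,y,v) = \frac{1}{2v} - \frac{v}{2}\|x-y\|^2$ satisfies every hypothesis of Theorem~\ref{th:convergence}, so that the convergence conclusion for Algorithm~\ref{newton} follows immediately. The authors have already observed that $G \in \mathcal{C}^2$ on its domain and that \eqref{Regularity}, \eqref{Monotonicity}, \eqref{Uniform Convergence} hold, and the preceding proposition establishes \eqref{Twist} on $\Omega\times Y\times ]0,\gamma[$ for some $\gamma > 0$ chosen smaller than $1/\sup_{(x,y)\in X\times Y}\|x-y\|$ (finite by compactness of $X$). Three tasks remain: reparametrize so that the third variable ranges over $\R$, check the genericity conditions \eqref{GenOmega} and \eqref{GenX}, and invoke the theorem.

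For the reparametrization I use Remark~\ref{interval}. Let $\zeta:\R\to]0,\gamma[$ be a $\mathcal{C}^2$ strictly increasing diffeomorphism with $\zeta(v)\to 0$ as $v\to-\infty$, and set $\widetilde{G}(x,y,v) = G(x,y,\zeta(v))$. The chain rule gives $\partial_v\widetilde{G}(x,y,v) = \zeta'(v)\,\partial_vG(x,y,\zeta(v)) < 0$ and $\nabla_x\widetilde{G}(x,y,v) = \zeta(v)(y-x)$, so conditions \eqref{Monotonicity} and \eqref{Regularity} are preserved on $\Omega\times Y\times\R$. The \eqref{Twist} condition for $\widetilde{G}$ transfers from that of $G$ since $\zeta$ is a bijection onto $]0,\gamma[$; and $\zeta(v)\to 0$ forces $\widetilde{G}(x,y,v) = 1/(2\zeta(v)) - \zeta(v)\|x-y\|^2/2\to+\infty$, giving \eqref{Uniform Convergence}. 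By Remark~\ref{interval}, the generated Jacobian problems for $G$ and $\widetilde{G}$ are equivalent, so any $\widetilde{\psi}$ obtained from the algorithm yields a solution to the Near Field Parallel Reflector problem.

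The heart of the argument lies in verifying the genericity conditions via the M\"obius-diagram description established above. For distinct indices $i,j$, the interface $\Gamma_{ij}(\psi)$ is contained in a circle whose center lies on the line through $y_i$ and $y_j$, or in the perpendicular bisector of $\{y_i,y_j\}$ in the degenerate case. Given three distinct indices $i,j,k$, the two interfaces $\Gamma_{ij}(\psi)$ and $\Gamma_{ik}(\psi)$ lie on circles (or lines) whose centers belong to the two lines $y_iy_j$ and $y_iy_k$; since $Y$ contains no three collinear points these lines differ, so the two enclosing curves are distinct and meet in at most two points. Hence $\mathcal{H}^1(\Gamma_{ijk}(\psi)) = 0$, establishing \eqref{GenOmega}. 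For \eqref{GenX}, the assumption that $\partial X$ contains no circle arc nor perpendicular bisector of two points of $Y$ means that $\partial X$ shares no open arc with the curve carrying $\Gamma_{ij}(\psi)$; their intersection is therefore a discrete, hence $\mathcal{H}^1$-negligible, set.

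With every structural hypothesis of Theorem~\ref{th:convergence} verified for $\widetilde{G}$, together with the assumptions on $X$, $\rho$, and path-connectedness of $\inter(X)\cap\{\rho>0\}$ given in the statement, the theorem applies directly and Algorithm~\ref{newton} converges toward a solution of~\eqref{eq:nearfield}. The only genuinely geometric step is the genericity analysis via the M\"obius structure; the reparametrization and verification of the analytic conditions on $\widetilde{G}$ are mechanical once $\zeta$ is suitably chosen.
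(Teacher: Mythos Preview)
Your proposal is correct and follows essentially the same approach as the paper: the paper does not provide a formal proof environment for this theorem, but the surrounding discussion (verifying \eqref{Regularity}, \eqref{Monotonicity}, \eqref{Uniform Convergence}, establishing \eqref{Twist} on $]0,\gamma[$, identifying the Laguerre cells as M\"obius cells to obtain \eqref{GenOmega} and \eqref{GenX}, and invoking Remark~\ref{interval} for the reparametrization) is exactly the argument you spell out. Your write-up is somewhat more explicit about why the centers of the two interface circles differ and about the transfer of conditions under $\zeta$, but the logical skeleton is identical.
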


\begin{remark}
The Generating function is defined on $\Omega \times Y \times ]0, \gamma[$ instead of $\Omega \times Y \times \R$. As mentioned in Remark \ref{interval}, if $\zeta : \R \to ]0, \gamma[$ is a $\Class^1$-diffeomorphism, then the function $\widetilde{G}$ defined by $\widetilde{G}(x,y,v) = G(x,y,\zeta(v))$ is a generating function defined on $\Omega \times Y \times \R$ and we can apply Algorithm~\ref{newton} to $\widetilde{G}$.
\end{remark}

\subsection{Implementation.} The main difficulty in the implementation of the Newton algorithm is the evaluation of the function $H$ and of its differential $DH$, which requires an accurate computation of the Laguerre diagram. For this, we use the fact that a M\"obius diagram can be obtained by intersecting a 3D Power diagram with a paraboloid~\cite{boissonat2007}.

\begin{definition}[Power diagram]
The power diagram of a set of $N$ weighted points $\mathcal{P} =((p_i,r_i))_{1\leq i \leq N}$ where $p_i \in \Rsp^d$ and $r_i \in \Rsp$ is the decomposition of the space into  Power cells given by 
\[ 
\Pow_i(\mathcal{P}) = \{ x \in \R^d |  \forall j \in \OneN: \Vert x - p_i\Vert^2 - r_i \leq \Vert x - p_j\Vert^2 - r_j \} 
\]
\end{definition}
\begin{proposition}
\label{mobiuspower}The Laguerre cells associated to the generating function $G$ defined in \eqref{eq:generatingfunctionoptic} are given for any $i$ by 
\[
\Lag_i(\psi)=  \Pi(\Pow_i(\mathcal{P}) \cap P) \cap \Omega, 
\] 
where $P$ is the paraboloid in $\R^3$ parametrized by $ x_3 = x_1^2 + x_2^2$,  $\Pi$ is the projection of $\R^3$ on $\R^2$  defined by $\Pi(x,y,z) = (x,y)$, and $(\Pow_i(\mathcal{P}))_{1 \leq i \leq N}$ is the Power diagram associated to the weighted points $\mathcal{P}$ given by
\begin{equation}
\label{ciri}
\forall i \in \OneN : 
\begin{cases}
\displaystyle p_i = \left(\frac{\psi_i}{2}y_i, \frac{-\psi_i}{4 } \right) \\
\displaystyle r_i = \frac{\psi_i^2}{16} + \frac{\psi_i^2 \Vert y_i\Vert^2}{4} - \frac{\psi_i \Vert y_i\Vert^2}{2} + \frac{1}{2 \psi_i}
\end{cases}
\end{equation}
\end{proposition}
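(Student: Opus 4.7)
The strategy is a direct lifting argument: parametrize a point of $\Omega$ as $x\in\R^2$ and its lift to the paraboloid $P$ as $\tilde{x}=(x,\|x\|^2)\in\R^3$, then show that the Power inequalities defining $\Pow_i(\mathcal{P})$, once restricted to $\tilde x\in P$, reduce exactly to the Laguerre inequalities defining $\Lag_i(\psi)$. The choice of $(p_i,r_i)$ in \eqref{ciri} is tailored to absorb all the $i$-independent quadratic terms that arise in this lifting.

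First, I would recall that
\[
\Lag_i(\psi) = \{x\in\Omega \mid \forall j\in\OneN,\ G(x,y_i,\psi_i) \geq G(x,y_j,\psi_j)\},
\]
with $G(x,y,v) = \tfrac{1}{2v} - \tfrac{v}{2}\|x-y\|^2$, and that
\[
\Pi\bigl(\Pow_i(\mathcal{P}) \cap P\bigr) = \{x\in\R^2 \mid \forall j,\ \|\tilde x - p_i\|^2 - r_i \leq \|\tilde x - p_j\|^2 - r_j,\ \tilde x=(x,\|x\|^2)\}.
\]
So the claim reduces to showing that, for every $x\in\Omega$ and every $i,j$,
\[
\|\tilde x - p_i\|^2 - r_i \leq \|\tilde x - p_j\|^2 - r_j \iff G(x,y_i,\psi_i) \geq G(x,y_j,\psi_j).
\]

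Next I would expand $\|\tilde x - p_i\|^2$ with $p_i=(\tfrac{\psi_i}{2}y_i,-\tfrac{\psi_i}{4})$ and $\tilde x=(x,\|x\|^2)$:
\[
\|\tilde x - p_i\|^2 = \|x\|^2 - \psi_i\langle x,y_i\rangle + \tfrac{\psi_i^2}{4}\|y_i\|^2 + \|x\|^4 + \tfrac{\psi_i}{2}\|x\|^2 + \tfrac{\psi_i^2}{16}.
\]
Subtracting $r_i$ as defined in \eqref{ciri}, the two purely $\psi_i$-quadratic terms $\tfrac{\psi_i^2}{4}\|y_i\|^2$ and $\tfrac{\psi_i^2}{16}$ cancel, and using $\|x\|^2 - 2\langle x,y_i\rangle + \|y_i\|^2 = \|x-y_i\|^2$ one gets
\[
\|\tilde x - p_i\|^2 - r_i = \|x\|^2 + \|x\|^4 + \tfrac{\psi_i}{2}\|x-y_i\|^2 - \tfrac{1}{2\psi_i} = A(x) - 2\,G(x,y_i,\psi_i),
\]
where $A(x) := \|x\|^2+\|x\|^4$ does not depend on $i$.

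Finally, subtracting the same identity for $j$ cancels $A(x)$ and yields the equivalence
\[
\|\tilde x - p_i\|^2 - r_i \leq \|\tilde x - p_j\|^2 - r_j \iff G(x,y_i,\psi_i) \geq G(x,y_j,\psi_j),
\]
so that $x\in\Lag_i(\psi)$ iff $\tilde x\in \Pow_i(\mathcal{P})$ and $x\in\Omega$, i.e.\ $\Lag_i(\psi) = \Pi(\Pow_i(\mathcal{P})\cap P)\cap\Omega$. The only obstacle is algebraic bookkeeping in the expansion of $\|\tilde x-p_i\|^2-r_i$; everything else is formal. Since $(p_i,r_i)$ is essentially forced by demanding the $\psi_i$-quadratic and $\psi_i^0$-constant terms to cancel (leaving precisely $\tfrac{\psi_i}{2}\|x-y_i\|^2$ and $-\tfrac{1}{2\psi_i}$), one can also motivate the formula \eqref{ciri} by reversing the computation rather than producing it out of thin air.
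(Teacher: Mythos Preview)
Your argument is correct and is exactly the natural direct computation; the paper itself does not supply a proof for this proposition (it relies on the general M\"obius-to-Power lifting from \cite{boissonat2007}), so your explicit verification is a welcome addition. One small slip: in the displayed identity you wrote
\[
\|\tilde x - p_i\|^2 - r_i = A(x) - 2\,G(x,y_i,\psi_i),
\]
but in fact the factor should be $1$, not $2$: since $G(x,y_i,\psi_i)=\tfrac{1}{2\psi_i}-\tfrac{\psi_i}{2}\|x-y_i\|^2$, one has $\tfrac{\psi_i}{2}\|x-y_i\|^2-\tfrac{1}{2\psi_i}=-G(x,y_i,\psi_i)$, hence $\|\tilde x - p_i\|^2 - r_i = A(x) - G(x,y_i,\psi_i)$. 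This does not affect the equivalence you derive (the sign is what matters), so the conclusion stands unchanged.
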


In our implementation of the algorithm, the intersection of power diagrams with a paraboloid is computed using an algorithm presented in \cite{machadomanhaesdecastro:hal-00952720}. Once the diagram is computed, the function $H$ and its differential $DH$ are computed using the trapezoidal rule.
Numerical experiments are performed with $X = [-1, 1]^2$ and $\mu$ equal to (one fourth) of the restriction of the  Lebesgue measure on $X$. The set $Y$ is randomly generated in the square $[0,1]^2$ for different values of $N$, associated with a discrete uniform measure $\nu$. Figure~\ref{initialdiagram} (left) shows the initial diagram $(\Lag_i(\psi))_{1 \leq i \leq N}$ with $N = 5000$ for some vector $\psi = \lambda \1$ with $\lambda > 0$. Figure~\ref{initialdiagram} (right) is the same diagram after convergence of the algorithm, where $\psi$ is an approximate solution of \eqref{eq:nearfield}. The graph of Figure~\ref{errorgraph} represents the error $\|H(\psi^k) - \nu\|_1$ as a function of iteration $k$. It shows superlinear convergence of the damped Newton method.
\begin{figure}
\begin{center}
  \includegraphics[width=.48\textwidth]{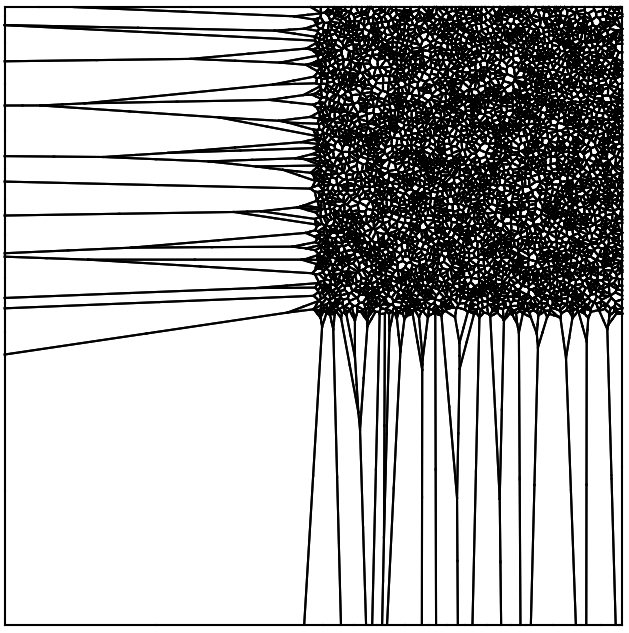}
  \includegraphics[width=.48\textwidth]{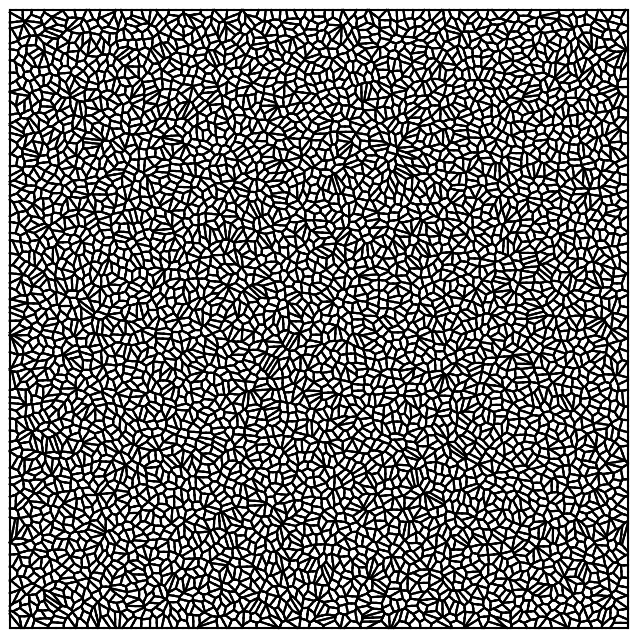}
\caption{Initial diagram for $N = 5000$, and final diagram, after convergence of the algorithm.}
\label{initialdiagram}
\end{center}
\end{figure}

\begin{figure}
\begin{center}
\includegraphics[scale = 0.7]{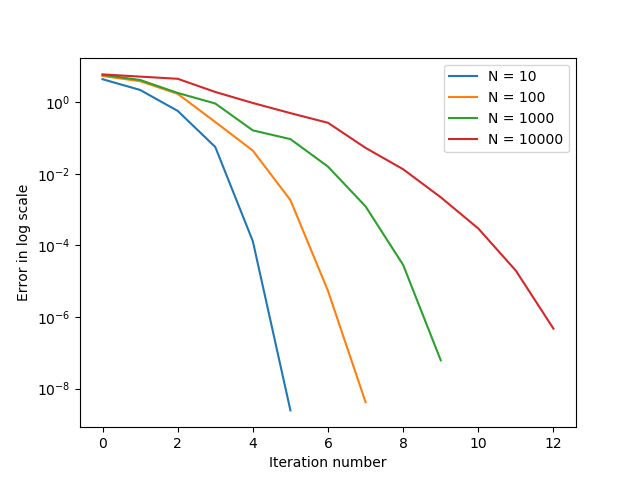}
\caption{Numerical error $\|H(\psi^k) - \nu\|_1$ as a function of the iteration $k$.}
\label{errorgraph}
\end{center}
\end{figure}

\subsection*{Acknowledgements}
We acknowledge the support of the French Agence Nationale de la
Recherche through the project MAGA (ANR-16-CE40-0014).

\bibliographystyle{plain}

\end{document}